\documentclass[journal, onecolumn]{IEEEtran}
\usepackage[T1]{fontenc}
\usepackage{amssymb,amsmath,amsthm}
\usepackage[colorlinks=true]{hyperref}
\usepackage{url}
\usepackage{breakurl}
\usepackage{verbatim}
\usepackage{ifthen}
\usepackage{framed}
\usepackage{multirow}
\usepackage{hhline}
\newtheorem{df}{Definition}[section]
\newtheorem{lem}{Lemma}[section]

\newtheorem{thm}{Theorem}[section]

\DeclareMathOperator{\tr}{tr}
\DeclareMathOperator{\id}{id}

\DeclareMathOperator{\hmax}{H_{max}}
\DeclareMathOperator{\hmin}{H_{min}}
\DeclareMathOperator{\dham}{d_{H}}

\newcommand{\ket}[1]{| \hspace{1pt} #1 \rangle}

\newcommand{\ketbrad}[2]{| \hspace{1pt} #1 \rangle \langle #2 \hspace{1pt} |}

\newcommand{\ketbra}[1]{\ketbrad{#1}{#1}}
\newcommand{\bramatket}[3]{\langle #1 \hspace{1pt} | #2 | \hspace{1pt} #3 \rangle}
\newcommand{\nbox}[2]{\hspace{#2pt} \mbox{#1} \hspace{#2pt}}
\newcommand{\norm}[2][]{#1| \! #1| #2 #1| \! #1|}
\newcommand{\abs}[2][]{#1| #2 #1|}
%
%

%

%

%

%
\def \includeprotocol {0}
\def \includetikz {0}

%

\newcommand{\cB}{\mathcal{B}}

\newcommand{\cH}{\mathcal{H}}

\newcommand{\cK}{\mathcal{K}}

\newcommand{\cS}{\mathcal{S}}

\newcommand{\cU}{\mathcal{U}}
\newcommand{\cV}{\mathcal{V}}

\newcommand{\cX}{\mathcal{X}}
\newcommand{\cY}{\mathcal{Y}}
\newcommand{\cZ}{\mathcal{Z}}

\def \includeprotocol {1}
\def \includetikz {1}
\ifthenelse{\equal{\includeprotocol}{1}}
{
\newcounter{protcounter}
\setcounter{protcounter}{0}
\newenvironment{prot}[1]
{
	\refstepcounter{protcounter}
	\begin{framed}
	\noindent \textbf{Protocol~\arabic{protcounter}:}\ {\texttt{#1}}\\
}
{
	\end{framed}
}
}
{}

\ifthenelse{\equal{\includetikz}{1}}
{
\usepackage{tikz}
\usepackage{subfig}
\usetikzlibrary{arrows, calc}
}
{}

\newcommand*{\splitmodelneither}[2]
{
\begin{tikzpicture}[scale=#1]
	\draw[fill=gray!30] (2, 0) rectangle (4, 1);
	\draw[fill=gray!30] (2, 2.5) rectangle (4, 3.5);
	\node (alice) at (3, 3) {Alice};
	\node (bob) at (3, 0.5) {Bob};
	{
	\ifthenelse{\equal{#2}{empty}}
	{}
	{\node (phase) at (3, 4.5) {#2};}
	}
	\draw [thick, ->] (3.15, 2.4) -- (3.15, 1.1);
	\draw [thick, ->] (2.85, 1.1) -- (2.85, 2.4);
\end{tikzpicture}
} 
\newcommand*{\splitmodelAlice}[2]
{
\begin{tikzpicture}[scale=#1]
	\draw[fill=black] (2.95, 2) rectangle (3.05, 3.5);
	\draw[fill=gray!30] (2, 0) rectangle (4, 1);
	\draw[fill=gray!30] (-0.25, 2.5) rectangle (2.25, 3.5);
		\draw[fill=gray!30] (3.75, 2.5) rectangle (6.25, 3.5);
	\node (alice) at (1, 3) {Alice};
	\node (amy) at (5, 3) {Amy};
	\node (bob) at (3, 0.5) {Bob};
	{
	\ifthenelse{\equal{#2}{empty}}
	{}
	{\node (phase) at (3, 4.5) {#2};}
	}
	\draw [thick, ->] (0.85, 2.4) -- (2.55, 1.1);
	\draw [thick, ->] (2.85, 1.1) -- (1.15, 2.4);
	\draw [thick, ->] (4.85, 2.4) -- (3.15, 1.1);
	\draw [thick, ->] (3.45, 1.1) -- (5.15, 2.4);
\end{tikzpicture}
}
\newcommand*{\splitmodelBob}[2]
{
\begin{tikzpicture}[scale=#1]
	\draw[fill=black] (2.95, 0) rectangle (3.05, 1.5);
	\draw[fill=gray!30] (2, 2.5) rectangle (4, 3.5);
	\draw[fill=gray!30] (-0.25, 0) rectangle (2.25, 1);
	\draw[fill=gray!30] (3.75, 0) rectangle (6.25, 1);
	\node (alice) at (3, 3) {Alice};
	\node (bob) at (1, 0.5) {Bob};
	\node (brian) at (5, 0.5) {Brian};
	{
	\ifthenelse{\equal{#2}{empty}}
	{}
	{\node (phase) at (3, 4.5) {#2};}
	}
	\draw [thick, ->] (0.85, 1.1) -- (2.55, 2.4);
	\draw [thick, ->] (2.85, 2.4) -- (1.15, 1.1);
	\draw [thick, ->] (4.85, 1.1) -- (3.15, 2.4);
	\draw [thick, ->] (3.45, 2.4) -- (5.15, 1.1);
\end{tikzpicture}
}
\newcommand*{\splitmodelboth}[2]
{
\begin{tikzpicture}[scale=#1]
	\draw[fill=black] (2.95, 0) rectangle (3.05, 3.5);
	\draw[fill=gray!30] (-0.25, 2.5) rectangle (2.25, 3.5);
	\draw[fill=gray!30] (3.75, 2.5) rectangle (6.25, 3.5);
	\draw[fill=gray!30] (-0.25, 0) rectangle (2.25, 1);
	\draw[fill=gray!30] (3.75, 0) rectangle (6.25, 1);
	\node (alice) at (1, 3) {Alice};
	\node (amy) at (5, 3) {Amy};
	\node (bob) at (1, 0.5) {Bob};
	\node (brian) at (5, 0.5) {Brian};
	{
	\ifthenelse{\equal{#2}{empty}}
	{}
	{\node (phase) at (3, 4.5) {#2};}
	}
	\draw [thick, ->] (0.85, 2.4) -- (0.85, 1.1);
	\draw [thick, ->] (1.15, 1.1) -- (1.15, 2.4);
	\draw [thick, ->] (4.85, 2.4) -- (4.85, 1.1);
	\draw [thick, ->] (5.15, 1.1) -- (5.15, 2.4);
\end{tikzpicture}
}
\newcommand{\splits}[6]
{
\begin{figure}[h]
\centering
\subfloat
{
\ifthenelse{\equal{#1}{1}} {\splitmodelneither{#4}{Commit phase}} {}
\ifthenelse{\equal{#1}{2}} {\splitmodelAlice{#4}{Commit phase}} {}
\ifthenelse{\equal{#1}{3}} {\splitmodelBob{#4}{Commit phase}} {}
\ifthenelse{\equal{#1}{4}} {\splitmodelboth{#4}{Commit phase}} {}
}
\hspace{20pt}
\subfloat
{
\ifthenelse{\equal{#2}{1}} {\splitmodelneither{#4}{Wait phase}} {}
\ifthenelse{\equal{#2}{2}} {\splitmodelAlice{#4}{Wait phase}} {}
\ifthenelse{\equal{#2}{3}} {\splitmodelBob{#4}{Wait phase}} {}
\ifthenelse{\equal{#2}{4}} {\splitmodelboth{#4}{Wait phase}} {}
}
\hspace{20pt}
\subfloat
{
\ifthenelse{\equal{#3}{1}} {\splitmodelneither{#4}{Open phase}} {}
\ifthenelse{\equal{#3}{2}} {\splitmodelAlice{#4}{Open phase}} {}
\ifthenelse{\equal{#3}{3}} {\splitmodelBob{#4}{Open phase}} {}
\ifthenelse{\equal{#3}{4}} {\splitmodelboth{#4}{Open phase}} {}
}
{
\ifthenelse{\equal{#5}{empty}}
	{}
	{\caption{#5\label{#6}}}
}
\end{figure}
}

\begin{document}
\title{Secure Bit Commitment From Relativistic Constraints}
\author{J\k{e}drzej~Kaniewski,~Marco Tomamichel,~Esther~H{\"a}nggi~and~Stephanie~Wehner
\thanks{Manuscript received xxx; revised yyy. This work is supported by the National Research Foundation and the Ministry of Education of Singapore. Copyright (c) 2012 IEEE. Personal use of this material is permitted.  However, permission to use this material for any other purposes must be obtained from the IEEE by sending a request to pubs-permissions@ieee.org.}
\thanks{J.~Kaniewski, M.~Tomamichel, E.~H{\"a}nggi and S.~Wehner are with the Centre for Quantum Technologies, National University of Singapore, 3 Science Drive 2, Singapore 117543, email:~j.kaniewski@nus.edu.sg.}}
\maketitle
\begin{abstract}
We investigate two-party cryptographic protocols that are secure under
assumptions motivated by physics, namely special relativity and quantum mechanics. In particular, we discuss the security of bit commitment in so-called split models, i.e.~models in which at least one of the parties is not allowed to communicate during certain phases of the protocol. We find the minimal splits that are necessary to evade the Mayers-Lo-Chau no-go argument and present protocols that achieve security in these split models.
Furthermore, we introduce the notion of local versus global command, a subtle issue that arises when the split committer is required to delegate non-communicating agents to open the commitment. We argue that classical protocols are insecure under global command in the split model we consider. On the other hand, we provide a rigorous security proof in the global command model for Kent's quantum protocol~\cite{kent12a}. The proof employs two fundamental principles of modern physics, the no-signalling property of relativity and the uncertainty principle of quantum mechanics.
\end{abstract}
\begin{IEEEkeywords}
bit commitment, special relativity, quantum theory.
\end{IEEEkeywords}
\section{Introduction}
\label{sec:introduction}
\IEEEPARstart{T}{he} goal of two-party cryptography is to enable two parties, Alice and Bob, to solve a task in cooperation even if they do not trust each other. An example of such a task is the cryptographic primitive known as bit commitment. A bit commitment protocol traditionally consists of two phases: In the commit phase, Bob \emph{commits} a bit to Alice\footnote{Usually it is Alice who commits a bit to Bob. We decided to swap Alice and Bob as it allows us to simplify the notation in the proof of our main result. Throughout the paper it is Bob who commits a bit to Alice.}, who receives some form of confirmation that a commitment has been made. In the open phase, Bob reveals the bit to Alice. Security means that Bob should not be able to reveal anything but the committed bit, but nevertheless Alice cannot gain any information about the bit before the open phase. While many two-party cryptographic primitives have been defined, oblivious transfer and bit commitment are undoubtedly among the most important ones because they form essential building blocks for more complex problems~\cite{kilian88}.

Ideally, we would like to have protocols for such primitives that guarantee security without relying on any subjective (e.g.\ that a safe is difficult to open) or computational (e.g.\ that factoring a product of two large primes is difficult) assumptions. Unfortunately, however, it turned out that this is impossible, even if we allow quantum communication between Alice and Bob~\cite{mayers97, lo97a, dariano07, winkler11}. Much work has thus been invested into determining what kind of assumptions allow us to obtain security.
Of particular interest to this work are thereby assumptions of a physical nature, leading to information-theoretic security. Classical examples of such assumptions are, for example, access to some very special forms of shared randomness supplied in advance~\cite{rivest99}, access to a noisy communication channel~\footnote{To be more specific what is needed is a channel with a guaranteed level of noise. It is important that the noise is truly random and cannot be influenced by either party.}~\cite{crepeau97, winter03} or a limited amount of memory~\cite{maurer90}. Similarly, it has been shown that security is possible if the attacker's quantum memory is bounded~\cite{damgaard08, damgaard07, schaffner10} or more generally noisy~\cite{wehner08, koenig09, berta11}.

Another assumption is that of \emph{non-communication}. More precisely, one imagines that each party is split up into multiple agents who cannot communicate with each other for at least some parts of the protocol. Intuitively, the use of non-communicating agents can evade the standard no-go argument because while all agents in total have enough information to cheat, no single agent can cheat on his own.

On one hand, such non-communicating models have received considerable attention in classical cryptography, where such agents are often referred to as servers~\cite{kerenidis04} or provers~\cite{simard07}. For example, Ben-Or et al.~\cite{ben-or88} considered a simple protocol for bit commitment that is secure against classical attacks\footnote{Throughout this paper we will use the word classical to mean not quantum.} as long as the committer (Bob) is split up into two agents, Bob and Brian, who are not allowed to communicate throughout the protocol. This protocol can also be modified to give security against quantum adversaries~\cite{simard07}. Similarly, many classical protocols for other tasks have been proposed under the assumption of non-communication, such as distributed oblivious transfer~\cite{naor00}, i.e.\ symmetric private information retrieval~\cite{gertner98, malkin00, kerenidis04}, or simple private information retrieval~\cite{gasarch04}. In all such protocols it was assumed that the agents of one party can never communicate during any point in the protocol, or thereafter.

On the other hand, physicists have considered so-called relativistic assumptions for cryptography~\cite{kent99, kent05, colbeck06, kent11, kent12a}. 
In essence, this takes the form of non-communicating models where the fact that a party's agents cannot communicate is justified by their physical separation and the finite speed of light. The key difference to classical non-communicating models is that in relativistic models the separation is generally only imposed during certain periods of the protocol, whereas classical models generally assume a separation, i.e.\ non-communication, for all times. For example, relativistic protocols may only demand a split into several non-communicating agents after the commit phase of a bit commitment protocol is over~\cite{kent11, kent12a}. Another assumption based on relativity is the notion of guaranteed message delivery times (see Appendix~\ref{app:gmdt}) or the assumption of an accelerated observer\footnote{The authors consider two inertial participants sharing a noiseless quantum channel in the presence of a uniformly accelerated eavesdropper. They show that any information the eavesdropper manages to acquire is inherently noisy which allows the two honest participants to communicate securely. It is well-known in cryptography that most cryptographic primitives can be implemented securely as long as an external source of guaranteed noise is present.}~\cite{bradler09}.

Here, we will consider the security of bit commitment protocols under the assumption that one (or both) parties Alice and Bob, are forced to be split into non-communicating agents. Motivated by the relativistic protocols of~\cite{kent11, kent12a}, we thereby do \emph{not} demand that the parties are split into non-communicating agents for all time, but merely during certain phases of the protocol. A bit commitment protocol can be naturally divided into: the commit phase, the wait phase, the open phase, and the verification phase (see Section~\ref{sec:bitcom}). We thereby introduce the explicit notion of the wait and verification phases, which are usually only implicitly defined, in order to precisely divide the overall interaction between Alice and Bob into time frames. Our first contribution is
\begin{itemize}
\item A classification of non-communicating models into subclasses which are characterised by the phases in which Alice or Bob is split into non-communicating agents. We find that we can reduce our considerations to two minimal models, namely the one in which Alice is split during the commit and wait phases ($\alpha$-split) and the one in which Bob is split during the wait and open phases ($\beta$-split). Either of these two models allows to evade the no-go theorem because the operations required for cheating are forbidden by the split.
\end{itemize}
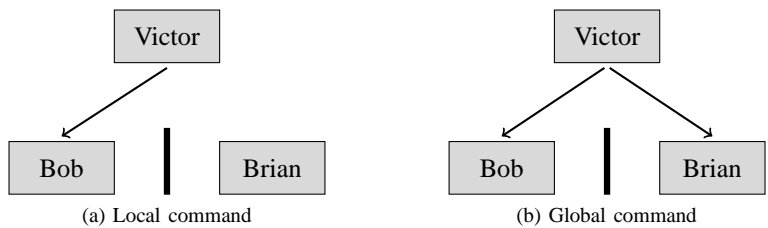
\begin{figure}[h]
\centering
\subfloat[Local command]
{
\begin{tikzpicture}[scale=0.7]
	\draw[fill=black] (2.95, 0) rectangle (3.05, 1.25);
	\draw[fill=gray!30] (0, 0) rectangle (2, 1);
	\draw[fill=gray!30] (4, 0) rectangle (6, 1);
	\draw[fill=gray!30] (2, 2.5) rectangle (4, 3.5);
	\node (Bob1) at (1, 0.5) {Bob};
	\node (Bob2) at (5, 0.5) {Brian};
	\node (Bob2) at (3, 3) {Victor};
	\draw [thick, ->] (3, 2.4) -- (1, 1.1);
\end{tikzpicture}
}
\hspace{36pt}
\subfloat[Global command]
{
\begin{tikzpicture}[scale=0.7]
	\draw[fill=black] (2.95, 0) rectangle (3.05, 1.25);
	\draw[fill=gray!30] (0, 0) rectangle (2, 1);
	\draw[fill=gray!30] (4, 0) rectangle (6, 1);
	\draw[fill=gray!30] (2, 2.5) rectangle (4, 3.5);
	\node (Bob1) at (1, 0.5) {Bob};
	\node (Bob2) at (5, 0.5) {Brian};
	\node (Bob2) at (3, 3) {Victor};
	\draw [thick, ->] (2.95, 2.4) -- (1, 1.1);
	\draw [thick, ->] (3.05, 2.4) -- (5, 1.1);
\end{tikzpicture}
}
\caption{If Bob is required to perform two separate openings it becomes important whether the command which bit he is supposed to unveil is transmitted to just one or both agents.}
\end{figure}
It turns out that in certain split models a new, subtle issue needs to be addressed. If a cheating Bob is split into two agents, Bob and Brian, during the open phase of the commitment, who decides which bit should be opened? In standard bit commitment protocols this question does not arise, as there is only one cheating party. Bob will simply announce to Alice that he wishes to unveil a particular bit, and try to provide a matching proof. However, in a model of several distinct agents, Bob and Brian could conceivably base the decision about which bit to unveil on some external input. For example, depending on the latest stockmarket news they both decide to open a $0$ or a $1$, even though they themselves cannot communicate. Intuitively, we would like a bit commitment scheme to be secure in the latter setting, analogous to the case of a single party which can of course also base its decision on external events. To capture this subtlety, we introduce an external verifier, Victor, who dictates which bit should be unveiled. We thereby speak of \emph{local} command if Victor only issues a command to one of the two agents, Bob. We speak of \emph{global} command if Victor issues a matching command to both Bob and Brian. Note that Victor should be thought of as an external verifier invoked solely to quantify Bob's cheating power and that he plays absolutely no role when both Alice and Bob are honest. The local and global command models will be defined in purely mathematical terms and the only reason to introduce Victor is to give these mathematical definitions some intuitive meaning. Note that a related concept has recently been introduced independently in~\cite{kent12b} under the name of the \emph{oracle input model}. In a model without separated agents, the local and global command models are equivalent but we will see that they differ in a relativistic setting. More precisely, our second contribution is to
\begin{itemize}
\item Introduce the distinction between local and global command in the models based on the $\beta$-split. We show that there is a simple classical protocol that is secure under the local command. However, we proceed to show that there exists \emph{no} classical protocol that is secure under global command in the class of $\beta$-split models.
\end{itemize}
The latter naturally leads to the question, whether there is a \emph{quantum} protocol that is secure even when Victor issues a global command. A quantum protocol that is likely to be secure under global command was given in~\cite{kent11}.
Another quantum $\beta$-split protocol was proposed by Kent~\cite{kent12a}, which has the very appealing feature that it can be implemented by the honest parties using only single qubit measurements in BB84~\cite{bb84} bases, without the use of any quantum memory. Yet, no explicit security bounds were provided in~\cite{kent12a}.
Our final contribution is to
\begin{itemize}
\item Provide a formal security proof and security bounds for the protocol proposed in~\cite{kent12a} in the global command model. 
\end{itemize}
We want to stress that a sketch of a security proof was given in~\cite{kent12a} already; however, we were unable to derive explicit security bounds from the arguments provided there. We thus devised an alternative proof, which allows us to find these parameters explicitly.

Our proof requires two ingredients: First, we make use of the fact that the two agents cannot communicate. Second, we employ an uncertainty relation in terms of min- and max-entropies~\cite{tomamichel11}. This relation was previously used to prove the security of quantum key distribution~\cite{tomamichellim11}, and our result illustrates its power to prove security of other cryptographic primitives.

{\bf Outline:} The paper is structured as follows. Section~\ref{sec:preliminaries} contains some basic definitions and technical tools essential for the proof. We also remind the reader what a bit commitment protocol is and what conditions it should satisfy. In Section~\ref{sec:relativistic-models} we introduce the concept of split models and, by examining the standard no-go argument, we find the minimal split requirements that might give us security and for these we state generalised security requirements. We also show how certain splits arise from special relativity if we require certain parts of the protocol to take place at space-like separated points. Section~\ref{sec:minimal-splits} presents simple protocols that achieve security in the minimal split models. Section~\ref{sec:transmitting-measurement-outcomes} is entirely dedicated to the bit commitment protocol proposed by Kent~\cite{kent12a}: first we describe the protocol and then we analyse its security to obtain explicit security bounds.

%
%
\section{Preliminaries}
\label{sec:preliminaries}
\subsection{Hamming distance}
Let $[n] = \{1, 2, \ldots, n\}$ and let $x$ be an $n$-bit string, $x \in \{0, 1\}^{n}$, and denote the $k$-th bit of $x$ by $x_{k}$. Define the Hamming distance between two strings $x, y \in \{0, 1\}^{n}$ to be the number of positions at which they differ
\begin{equation*}
\dham(x, y) := |\{k \in [n] : x_{k} \oplus y_{k} = 1\}|.
\end{equation*}
\subsection{Probability distributions}
\label{sec:prelim-probdist}
Let $X$ be a random variable taking values in $\cX$ and distributed according to $P_{X}$. The R{\'e}nyi entropy of order $\alpha \in \mathbb{R}_{+} \setminus \{0, 1, \infty\}$ is defined as~\cite{renyi61}
\begin{equation*}
\textnormal{H}_{\alpha}(X) := \frac{1}{1 - \alpha} \log \left(\sum_{x \in \cX} P_{X}(x)^\alpha \right).
\end{equation*}
The special cases $\alpha \in \{0, 1, \infty\}$ are defined as limits $\textnormal{H}_{\alpha}(X) = \lim_{\beta \to \alpha} \textnormal{H}_{\beta}(X)$. Note that $\textnormal{H}_{0}(X) = \log|\{x \in \cX : P_{X}(x) > 0\}|$ and that the R{\'e}nyi entropies exhibit monotonicity
\begin{equation*}
\textnormal{H}_{\alpha}(X) \geq \textnormal{H}_{\beta}(X) \iff \alpha \leq \beta.
\end{equation*}
For $\abs{\cX} = 2$ and $\alpha = 1$ we obtain the binary entropy
\begin{equation*}
h(q) := - q \log q - (1 - q) \log (1 - q).
\end{equation*}
Let $P_{XY | UV}$ be a joint conditional probability distribution. $P_{XY | UV}$ satisfies no-signalling if for all $u \in \cU, x \in \cX$ the value of the sum
\begin{equation*}
\sum_{y \in \cY} P_{XY|UV}(X = x, Y = y | U = u, V = v)
\end{equation*}
does not depend on a particular choice of $v \in \cV$.
\subsection{Quantum notation}
Let $\rho$ be a quantum state on a Hilbert space $\cH$, i.e. a positive semi-definite operator with $\tr \rho = 1$ acting on $\cH$. Let $\cS(\cH)$ be the set of all states on $\cH$. We say that $\rho_{XA}$ is a classical-quantum (cq) state if it can be written in the form
\begin{equation*}
\rho_{XA} = \sum_{x \in \cX} P_{X}(x) \ketbra{x}_{X} \otimes \rho_{x},
\end{equation*}
where $P_{X}$ is a probability distribution and $\rho_{x} \in \cS(\cH_{A})$. Then, we define the probability of guessing $X$ given access to the quantum system $A$ as
\begin{equation*}
p_{\textnormal{guess}}(X | A) := \max_{\{M_{x}\}} \sum_{x \in \cX} P_{X}(x) \tr (M_{x} \rho_{x}),
\end{equation*}
where the maximisation is taken over all positive operator-valued measurements (POVMs) on $\mathcal{H}_{A}$. The \emph{min-entropy} of $X$ is defined as $\hmin(X) := \textnormal{H}_{\infty}(X)$. The min-entropy of $X$ conditioned on $A$ is defined as
\begin{equation*}
\hmin(X | A) := - \log p_{\textnormal{guess}}(X | A).
\end{equation*}
We say that $\rho_{XY}$ is a classical-classical (cc) state if it can be written in the form
\begin{equation*}
\rho_{XY} = \sum_{x \in \cX, y \in \cY} P_{XY}(x, y) \ketbra{x}_{X} \otimes \ketbra{y}_{Y}.
\end{equation*}
The \emph{max-entropy} of $X$ is defined as $\hmax(X) := \textnormal{H}_{\frac{1}{2}}(X)$. The max-entropy of $X$ conditioned on $Y$ is defined as
\begin{equation*}
\hmax(X | Y) := \log \sum_{y \in \cY} \Pr[Y = y] \cdot 2^{\hmax(X | Y = y)}.
\end{equation*}
%
%
%
%
\subsection{Uncertainty relation}
Let $\rho_{ABC}$ be any tri-partite state and let $\{M_{z}\}_{z \in \cZ}$ and $\{N_{x}\}_{x \in \cX}$ be two POVMs on the $A$ subsystem whose measurement results are represented by classical random variables $Z$ and $X$. The following cq-states arise from performing the measurements mentioned above\footnote{To simplify the notation we will omit all the subsystems on which the projector equals identity. Hence, in our shorthand notation $M_{z} \rho_{ABC}$ stands for $(M_{z} \otimes \mathbb{I}_{BC}) \rho_{ABC}$.} :
\begin{align*}
\rho_{ZB} &:= \sum_{z \in \cZ} \ketbra{z}_{Z} \otimes \tr_{AC} (M_{z} \rho_{ABC}) \quad \mbox{and}\\
\rho_{XC} &:= \sum_{x \in \cX} \ketbra{x}_{X} \otimes \tr_{AB} (N_{x} \rho_{ABC}).
\end{align*}
\begin{thm}
\label{thm:uncertainty}
\cite{tomamichel11} For any tri-partite state $\rho_{ABC}$ the following uncertainty relation holds
\begin{equation}
\label{eq:uncertainty}
\hmax(Z | B) + \hmin(X | C) \geq \log \frac{1}{c},
\end{equation}
where the entropies are evaluated for $\rho_{ZB}$ and $\rho_{XC}$, respectively, and $c := \max_{z, x} \norm{\sqrt{M_{z}} \sqrt{N_{x}}}_{\infty}^{2}$.
\end{thm}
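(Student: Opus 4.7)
My plan is to follow the standard route through Naimark dilation, smooth-entropy duality, and a single operator inequality that packages the overlap $c$. First, I would replace the two POVMs by projective measurements: by Naimark's theorem, there exist a finite ancilla $A'$, a state $\ket{0}_{A'}$, and projective measurements $\{P_{z}\}_{z\in\cZ}$, $\{Q_{x}\}_{x\in\cX}$ on $AA'$ whose restriction to the ancilla in state $\ket{0}_{A'}$ reproduces the original POVMs on $A$. A short calculation shows the overlap is preserved, $\max_{z,x}\norm{P_{z} Q_{x}}_{\infty}^{2} = c$, because $P_{z}(\ketbra{0}_{A'})P_{z} = \sqrt{M_{z}}\ketbra{0}_{A'}\sqrt{M_{z}}$ after an appropriate isometric embedding. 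This lets me assume, without loss of generality, that both measurements are projective.

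Second, I would purify $\rho_{ABC}$ to a pure state $\ket{\psi}_{ABCD}$. Applying the $X$-measurement coherently via the isometry $\sum_{x} \ket{x}_{X} \otimes Q_{x}$ produces a pure classical-quantum extension, so the duality relation for min- and max-entropies (e.g.\ Theorem 4.20 of Tomamichel's PhD thesis and equivalently used in~\cite{tomamichel11}) gives
\begin{equation*}
\hmin(X | C) = - \hmax(X | BD).
\end{equation*}
With this identity, the desired inequality \eqref{eq:uncertainty} is equivalent to the one-sided statement
\begin{equation*}
\hmax(Z | B) \geq \hmax(X | BD) - \log c.
\end{equation*}
This rewriting is conceptually helpful: both terms now involve the max-entropy, so I only need to show that an observer holding $B$ who measures the purification in the $Z$-basis is \emph{at most} a factor $c$ worse at tracking $Z$ than an observer with $BD$ is at tracking $X$.

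Third, I would prove the above max-entropy inequality by using the fidelity/SDP characterization of $\hmax$. Writing $2^{\hmax(X|BD)} = \min_{\sigma_{BD}} F(\rho_{XBD},\mathbb{I}_X\otimes\sigma_{BD})^{2}\cdot|\cX|$ (and similarly for $Z|B$), the task reduces to a concrete operator inequality relating the two post-measurement states through the completely positive map $\mathcal{R}:\xi_{AA'}\mapsto\sum_{z}\ket{z}\bra{z}_{Z}\,\tr_{AA'}(P_{z}\xi)$. The overlap condition $\norm{P_{z}Q_{x}}_{\infty}^{2}\leq c$ is exactly what lets me bound $\mathcal{R}(Q_{x}\cdot Q_{x})$ in Loewner order by $c\cdot Q_{x}$ inside any fidelity calculation; substituting this bound into the fidelity representation yields the factor $-\log c$. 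The main obstacle, and the step that requires the most care, is making this last manipulation tight: one has to choose the optimizer $\sigma_{BD}$ in the $\hmax(X|BD)$ SDP, push it through the dual operation on $B$ alone, and verify that no slack appears beyond the single factor of $c$. Once that operator inequality is established, combining it with the duality of the second step delivers the claimed $\hmax(Z|B)+\hmin(X|C)\geq\log(1/c)$.
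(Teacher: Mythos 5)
The paper does not prove this theorem: it is imported verbatim from \cite{tomamichel11} and used as a black box, so there is no in-paper argument to compare yours against. That said, the architecture you propose---Naimark dilation to projective measurements, purification plus the duality between $\hmin$ and $\hmax$, and an overlap-driven operator inequality inside the fidelity/SDP characterization of $\hmax$---is in fact the strategy used in \cite{tomamichel11} and in Tomamichel's thesis, so the route is the right one.

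Two concrete gaps remain before this is a proof. First, your duality identity is off: after Naimark dilation (ancilla $A'$) and purification by $D$, the coherent isometry $V=\sum_x\ket{x}_X\otimes Q_x$ acts on $AA'$ and produces a pure state on the registers $X(AA')BCD$. The purifying system of $\rho_{XC}$ is therefore $AA'BD$, not $BD$, and the correct identity is $\hmin(X|C)=-\hmax(X|AA'BD)$, so the reduced target is $\hmax(Z|B)\geq\hmax(X|AA'BD)-\log c$. Dropping $AA'$ replaces this by the \emph{strictly stronger} inequality $\hmax(Z|B)\geq\hmax(X|BD)-\log c$ (since $\hmax(X|BD)\geq\hmax(X|AA'BD)$ by data processing); for that to be legitimate you would need to argue that the post-measurement residue in $AA'$ carries no additional information about $X$ given $BD$, which is false for general non-rank-one $Q_x$. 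One fix is to refine the Naimark measurement to a rank-one projective measurement before invoking duality, but that step is not in the proposal.

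Second, the central operator inequality $\mathcal{R}(Q_x\,\cdot\,Q_x)\leq c\,Q_x$ is ill-typed: $\mathcal{R}$ maps operators on $AA'$ to operators on the classical register $Z$, so its output cannot be compared with $Q_x$ in Loewner order. The bound the overlap condition actually supplies is of the form $P_zQ_xP_z\leq\norm{P_zQ_x}_{\infty}^{2}\,P_z\leq c\,P_z$ (equivalently $Q_xP_zQ_x\leq c\,Q_x$), and the delicate part of the argument is pushing such an inequality through the optimizer $\sigma$ in the fidelity/SDP expression for $\hmax$ so that exactly one factor of $c$ appears and no extra slack is lost to the purifying registers $AA'D$ that now sit inside the conditioning. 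You have flagged this as the step needing the most care, and correctly so---but as written it is the entire nontrivial content of the theorem and it is only asserted, not derived. Write out the fidelity dual explicitly, identify the operator you must bound, and only then invoke the overlap; the way the sketch stands, the conclusion is stated rather than reached.
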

\subsection{Bit commitment}
\label{sec:bitcom}
Bit commitment is a primitive that allows Bob to commit a bit $b$ to Alice in a way that is both binding (Bob cannot later convince Alice that he actually committed to $1 - b$) and hiding (Alice cannot figure out what $b$ is before Bob decides to unveil it). In this section we discuss how to describe a bit commitment protocol\footnote{Note that we do not consider the most general class of protocols as we assume that the open phase involves one-way communication from Bob to Alice only.} and how to formalise the desired security requirements.

Any action taken by Alice or Bob can be described by a completely positive, trace-preserving (CPTP) map and the entire protocol can be defined by specifying these maps. In this paper we will denote maps performed by Alice and Bob by  $\Lambda$ and $\Phi$, respectively. The subscript $X \to Y$ means that the map acts on (reads and/or modifies) the existing register $X$ and creates a new register $Y$. Moreover, identity is assumed on any subsystems not explicitly mentioned within the map: $\Lambda_{X \to Y} (\rho_{XYZ})$ stands for $(\Lambda_{X \to Y} \otimes \id_{Z}) (\rho_{XYZ})$.
%
%
%
%
%
%
%
%
%

The usual description of a bit commitment protocol divides it into two phases: commit and open. However, as our scenarios rely on timing and communication constraints, it is useful to be more explicit about the structure of the protocol. We divide the protocol into four phases: \emph{commit}, \emph{wait}, \emph{open} and \emph{verify}.
The commit and open phases are the essence of the protocol: they are the only phases during which Alice and Bob interact. The wait phase acts merely as a separator (this is when the commitment is valid), while in the verify phase Alice uses the information collected in the previous phases to verify the commitment and decide whether to accept or reject it.

Let $\rho_{ABC}$ be the state that Alice and Bob share at the end of the commit phase if they are both honest.\footnote{Any private or shared randomness is included in the description of the state, hence, given a protocol we can extract a unique $\rho_{ABC}$.} The subsystems $A$ and $B$ are controlled by Alice and Bob, respectively, while subsystem $C$ is a classical register in Bob's posession indicating which bit Bob has (honestly) committed to. Let $\Phi_{BC \to P}^{\textnormal{open}}$ be the quantum operation that Bob applies in the open phase and it should be thought of as extracting a proof of his commitment from the subsystems $B$ and $C$ and storing it in the (possibly quantum) subsystem $P$\footnote{The honest opening map will simply read the value of the classical register $C$, hence, its state will not be affected.}
\begin{equation*}
\rho_{ABPC} = \Phi_{BC \to P}^{\textnormal{open}} (\rho_{ABC}).
\end{equation*}
In the last step of the open phase Bob passes the subsystems $P$ and $C$ to Alice. Note that as $C$ is a classical register Alice is automatically assumed to read it and, hence, she finds out what Bob claims to have commited to. Let $\Lambda_{APC \to F}^{\textnormal{verify}}$ be the quantum operation that Alice applies in the verify phase, which creates a classical binary register (flag), $F$, indicating whether the commitment is accepted or rejected
\begin{equation*}
\rho_{ABPCF} = \Lambda_{APC \to F}^{\textnormal{verify}} (\rho_{ABPC}),
\end{equation*}
and let us denote a (classical) basis of the subsystem $F$ by $\{\ket{\textnormal{accept}},\ket{\textnormal{reject}}\}$. Describing the honest protocol suffices to define correctness.
\begin{df}
A bit commitment protocol is \emph{perfectly correct} if $\rho_{ABC}$ satisfies
\begin{equation*}
\bramatket{\textnormal{accept}}{\tr_{ABPC} \Lambda_{APC \to F}^{\textnormal{verify}} ( \Phi_{BC \to P}^{\textnormal{open}} (\rho_{ABC}))}{\textnormal{accept}} = 1.
\end{equation*}
\end{df}
If one of the parties is dishonest and does not follow the protocol then the state shared between Alice and Bob is no longer well-defined. We will use $\sigma$ to denote such a dishonest state\footnote{We make no assumptions on what the dishonest party stores in their part of the state. In particular it might contain some ancillary systems to be used later.} to distinguish them from the honest states denoted by $\rho$. Security guarantee for honest Bob states that Alice finds it difficult to guess the value of his commitment before the open phase. If Alice is dishonest and does not follow the protocol then the state shared at the end of the commit phase, $\sigma_{ABC}$, does not necessarily equal $\rho_{ABC}$. However, it is important to note that the classical register $C$ is still well-defined since Bob is honest. Let $\cK_{A}$ be the set of all tri-partite states that Alice might enforce at the end of the commit phase. Informally, a bit commitment is $\delta$-hiding if for any cheating strategy the probability that Alice guesses the committed bit correctly before the open phase is upperbounded by $\frac{1}{2} + \delta$.
%
%
%
\begin{df}
\label{df:hiding}
A bit commitment protocol is \emph{$\delta$-hiding} if all $\sigma_{ABC} \in \cK_{A}$ satisfy
\begin{equation*}
p_{\textnormal{guess}}(C|A) \leq \frac{1}{2} + \delta.
\end{equation*}
\end{df}
Similarly, if Bob is dishonest then different states may be reached at the end of the commit phase and let $\cK_{B}$ be the set of all states that he might enforce at the end of the commit phase. Note that the classical register $C$ is no longer well-defined so we will simply talk about bi-partite states $\sigma_{AB} \in \cK_{B}$. In order to cheat successfully Bob must be able to produce valid proofs for both values of $C$, which implies that there are two distinct dishonest opening maps: Bob applies $\Phi_{B \to PC}^{\textnormal{cheat}, 0}$ if he chooses to open $0$ and $\Phi_{B \to PC}^{\textnormal{cheat}, 1}$ if he chooses to open $1$. The cheating map $\Phi_{B \to PC}^{\textnormal{cheat}, b}$ extracts the proof of having committed to $b$ from the subsystem B, stores it in the subsystem $P$ and stores $b$ in the newly-created register $C$
\begin{equation*}
\sigma_{ABP}^{b} \otimes \ketbra{b}_{C}  = \Phi_{B \to PC}^{\textnormal{cheat}, b} (\sigma_{AB}).
\end{equation*}
In the last step Bob gives $P$ and $C$ to Alice, who verifies the commitment using the honest map. Let $p_{b}$ be the probability that Alice accepts Bob's unveiling of $b$
\begin{equation}
\label{eq:pb}
p_{b} = \bramatket{\textnormal{accept}}{\tr_{ABPC} \Lambda_{APC \to F}^{\textnormal{verify}} (\Phi_{B \to PC}^{\textnormal{cheat}, b} (\sigma_{AB}))}{\textnormal{accept}}.
\end{equation}
The security conditions on $p_{0}$ and $p_{1}$ depend on whether we are in the classical or quantum framework. Classically, we require that at the end of the commit phase at least one of $\{p_{0}, p_{1}\}$ is small. However, this requirement turns out to be too strong in the quantum world as explained in~\cite{dumais00} and a weaker security condition is proposed in the same paper.
\begin{df}
\label{df:weaklybnd}
A bit commitment protocol is \emph{$\varepsilon$-weakly binding} if for all $\sigma_{AB} \in \cK_{B}$ and for all cheating maps $\{\Phi_{B \to PC}^{\textnormal{cheat},b}\}_{b \in \{0, 1\}}$ we have $p_{0} + p_{1} \leq 1 + \varepsilon$.
\end{df}
Unfortunately, this definition does not give us composability (see Appendix~\ref{app:counterexample} for a counter-example). On the other hand the usual composable definition used for quantum protocols introduced in~\cite{damgaard07} turns out to be too stringent for the scenarios considered in this paper (see Appendix~\ref{sec:strongerdefimpossible} for details). Hence, throughout the paper we will stick to the weaker, non-composable definition.
\section{Relativistic models}
\label{sec:relativistic-models}
Before considering relativistic models let us briefly examine the original no-go argument (for the full version please refer to~\cite{mayers97, lo97a}) to see how it might be circumvented by imposing certain communication constraints.
\subsection{The original no-go argument and the split models}
\label{sec:no-go-and-split}
First note that we can restrict ourselves to protocols in which the state shared between Alice and Bob is pure at all times.\footnote{We assume that Alice and Bob start in a pure state and then all the actions can be performed coherently.} Let $\ket{\phi_{AB}^{b}}$ be the state at the end of the commit phase if Bob has decided to commit to $b$. We require that Alice should not be able to distinguish the two cases just by looking at her subsystem which implies that $\rho_{A}^{0} = \rho_{A}^{1}$, where $\rho_{A}^{b} = \tr_{B} \ketbra{\phi_{AB}^{b}}$. By Uhlmann's theorem~\cite{uhlmann85} there exists a unitary $U_{B}$ acting on the subsystem $B$ alone such that $U_{B} \ket{\phi_{AB}^{0}} = \ket{\phi_{AB}^{1}}$. Hence, if the states corresponding to both commitments are the same on Alice's side then Bob can cheat perfectly. This argument can be extended to the case in which $\rho_{A}^{0}$ and $\rho_{A}^{1}$ are close in trace distance (which means that they are difficult to distinguish) and then one can show that Bob can still cheat with high probability (for the exact trade-off based on this idea refer to~\cite{spekkens01}; for the optimal bounds on quantum bit commitment see~\cite{chailloux11}).

What is a split model? Informally, a split model is a model in which at least one party is required to delegate multiple agents to perform certain parts of the protocol in a non-communicating fashion. In this paper we only consider models in which we require a party to delegate at most $2$ agents. The basic rule of two-party cryptography is that there are no third parties: the world is split between Alice and Bob only, anything that does not belong to Alice is fully controlled by Bob. Now suppose that the split model requires that there are two agents of Bob (Bob and Brian). It is still true that Bob and Brian \emph{together} control everything that does not belong to Alice. However, the class of operations they can perform in a non-communicating fashion is now restricted, which might give us security. It is clear that the only way to achieve security is to split Alice during the period for which security for Bob should hold or \emph{vice versa}. Therefore, we arrive at two relevant splits.
\begin{itemize}
\item $\alpha$-split : Alice is split during the commit and wait phases.
\item $\beta$-split : Bob is split during the wait and open phases.
\end{itemize}
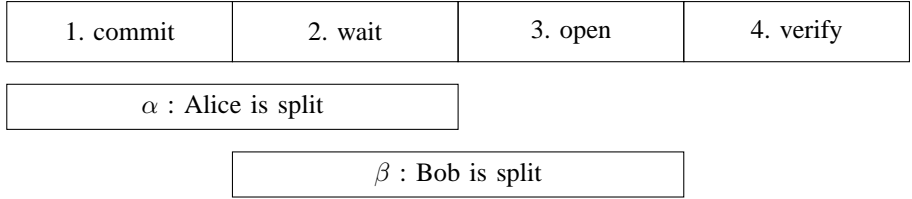
\begin{figure}[h]
\centering
\begin{tikzpicture}[scale=1]
\draw (0, 0) rectangle (3, 0.8);
\node (commit) at (1.5, 0.4) {1. commit};
\draw (3, 0) rectangle (6, 0.8);
\node (wait) at (4.5, 0.4) {2. wait};
\draw (6, 0) rectangle (9, 0.8);
\node (open) at (7.5, 0.4) {3. open};
\draw (9, 0) rectangle (12, 0.8);
\node (verify) at (10.5, 0.4) {4. verify};
\draw (0, -0.3) rectangle (6, -0.9);
\node (alpha) at (3, -0.6) {$\alpha$ : Alice is split};
\draw (3, -1.2) rectangle (9, -1.8);
\node (beta) at (6, -1.5) {$\beta$ : Bob is split};
\end{tikzpicture}
\caption{The two relevant types of separations: $\alpha$ and $\beta$.}
\label{fig:alphabeta}
\end{figure}
The standard no-go does not apply to the $\alpha$-split model because while $\rho_{A}^{0}$ might be globally fully distinguishable from $\rho_{A}^{1}$ they might locally look the same for both Alice and Amy (her agent). The $\beta$-split evades the no-go because the global unitary $U_{B}$ might be impossible to perform by Bob and Brian without communication. Note that whenever we say that a party is split during two (or more) consecutive phases of the protocol we mean one long split throughout the whole period rather than a sequence of short ones (the agents are \emph{not} allowed to get together in between).

We treat the splits as a resource. Hence, we are interested in the minimal splits that give security and we will show that $\alpha$ and $\beta$ are such minimal splits. What about models that impose strictly more restrictions than those? On one hand any protocol secure in the minimal split will remain secure in the more split model, we only need to ensure it is still feasible. E.g.~the protocol from~\cite{kent12a} was originally proposed in the model in which \emph{both} Alice and Bob are split during the wait and open phases, while our analysis applies to the $\beta$-split model (strictly less split). Therefore, our proof automatically extends to the original setting. On the other hand, imposing more split might allow for new, simpler protocols. E.g.~for the case of Bob being split at all times there exists a number of protocols~\cite{ben-or88, simard07, kent99, kent05}.

The number of possible split models is rather large and examining all of them case-by-case is unlikely to give any valuable insight. Hence, in this paper we only focus on the minimal splits: $\alpha$ and $\beta$.
%
%
It is clear that a split imposed on Alice will only affect her cheating power (not Bob's) and it is only the security guarantee for honest Bob that needs to be generalised. In the $\alpha$-split Bob commits to a bit by talking to Alice and Amy (subsystems $A$ and $A'$, respectively) and a natural generalisation of the hiding condition is to require that $\emph{neither}$ of them acquires significant knowledge about the value of $C$. In analogy to the non-split case let $\cK_{AA'}$ be the set of states that dishonest Alice and Amy can enforce at the end of the commit phase. Then the split counterpart of Definition~\ref{df:hiding} can be written as follows.
\begin{df}
An $\alpha$-split bit commitment protocol is \emph{$\delta$-hiding} if all $\sigma_{AA'BC} \in \cK_{AA'}$ satisfy
\begin{equation*}
p_{\textnormal{guess}}(C|X) \leq \frac{1}{2} + \delta \nbox{for}{10} X = \{A, A'\}.
\end{equation*}
\end{df}
Similarly, in the $\beta$-split let $\cK_{BB'}$ be the set of states that dishonest Bob and Brian can enforce at the end of the commit phase. In the introduction we mentioned the concept of an external verifier Victor who challenges Bob to open a particular bit and this is how we quantify Bob's cheating power. In the case of Bob and Brian performing two openings separately we need to specify whether Victor only tells Bob what to unveil or both Bob and Brian receive the message. We call these two scenarios the \emph{local} and \emph{global} command models, respectively. The first variant corresponds to the situation in which Bob makes the decision while Brian intends to behave consistently. If $b$ is the bit that Bob intends to unveil then the cheating maps in the local command model take the form
\begin{equation*}
\Phi_{BB' \to PP'CC'}^{\textnormal{cheat}, \textnormal{local}, b} = \Phi_{B \to PC}^{\textnormal{cheat}, b} \otimes \Phi_{B' \to P'C'}^{\textnormal{cheat}},
\end{equation*}
i.e.~Bob's actions depend on $b$ but Brian's behaviour is independent of it.

The natural motivation for the second scenario is a situation in which the agents are not allowed to communicate with each other but they might receive information from the outside world, hence, they both know $b$. The cheating maps in the global command model take the form
\begin{equation*}
\Phi_{BB' \to PP'CC'}^{\textnormal{cheat}, \textnormal{global}, b} = \Phi_{B \to PC}^{\textnormal{cheat},b} \otimes \Phi_{B' \to P'C'}^{\textnormal{cheat}, b},
\end{equation*}
i.e.~both opening maps depend on the value of $b$. Using the definition of $p_{b}$, the probability of successfully opening $b$, introduced in~\eqref{eq:pb} we can state the security condition in the $\beta$-split model.
\begin{df}
\label{df:weaklybnd}
A $\beta$-split bit commitment protocol is \emph{$\varepsilon$-weakly binding} in the local (global) command model if for all $\sigma_{ABB'} \in \cK_{BB'}$ and all the cheating maps allowed in the local (global) command model we have $p_{0} + p_{1} \leq 1 + \varepsilon$.
\end{df}
The two variations of the $\beta$-split model turn out to be rather different from the security point of view: there exist simple classical protocols secure in the local command model, while no classical protocol can be secure in the global command model (for details please refer to Section~\ref{sec:betasep}). Hence, to satisfy this stronger security requirement one needs to resort to quantum protocols and we investigate one of them in Section~\ref{sec:transmitting-measurement-outcomes}.
\begin{figure}[h!]
\centering
\begin{tikzpicture}[scale=0.7]
	\coordinate (y) at (0, 9);
	\coordinate (x) at (10, 0);
	\coordinate (origin) at (0, 0);
	\coordinate (p) at (5, 1);
	\coordinate (q) at (2, 4);
	\coordinate (r) at (8, 4);
	\coordinate (t) at (5, 7);
	\coordinate (t0) at (0, 1);
	\coordinate (t1) at (0, 4);
	\coordinate (t2) at (0, 7);
	\coordinate (x1) at (2, 0);
	\coordinate (x0) at (5, 0);
	\coordinate (x2) at (8, 0);
	\path [fill=gray!15] (q) -- (p) -- (4, 0) -- (origin) -- (0, 2);
	\path [fill=gray!15] (q) -- (t) -- (4, 8) -- (0, 8) -- (0, 5.5);
	\path [fill=gray!15] (r) -- (p) -- (6, 0) -- (10, 0) -- (10, 2);
	\path [fill=gray!15] (r) -- (t) -- (6, 8) -- (10, 8) -- (10, 5.5);
	\path [fill=gray!40] (4, 0) -- (p) -- (6, 0);
	\path [fill=gray!40] (4, 8) -- (t) -- (6, 8);
	\draw[<->] (y) node[left] {$t$} -- (origin) --  (x) node[below] {$x$};
	\filldraw[black] (p) circle (2pt) node[right] {$P$};
	\filldraw[black] (q) circle (2pt) node[above] {$Q$};
	\filldraw[black] (r) circle (2pt) node[above] {$R$};
	\filldraw[black] (t) circle (2pt) node[above] {$T$};
	\draw[dashed] (t0) node[left] {$0$} -- (p);
	\draw[dashed] (t1) node[left] {$1$} -- (r);
	\draw[dashed] (t2) node[left] {$2$} -- (t);
	\draw[dashed] (x0) node[below] {$0$} -- (t);
	\draw[dashed] (x1) node[below] {$-1$} -- (q);
	\draw[dashed] (x2) node[below] {$1$} -- (r);
\end{tikzpicture}
\caption{Light gray regions represent the light cones of $Q$ and $R$, while dark gray corresponds to the common past or future. $P$ is the latest point of the common past, while $T$ is the earliest point of the common future.}
\label{fig:cones}
\end{figure}
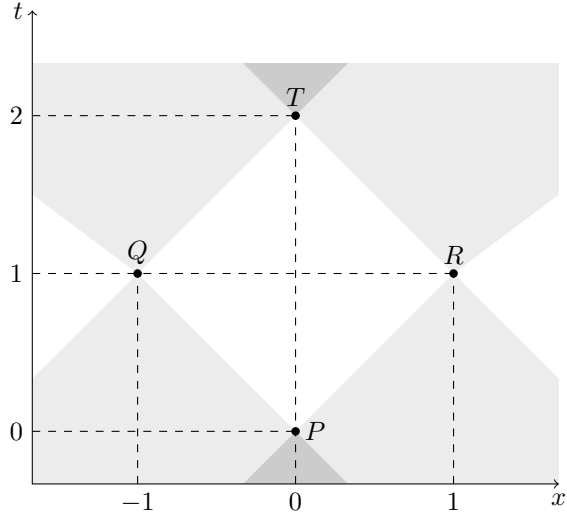
\subsection{Relativistic motivation}
\label{sec:relmotivation}
Special relativity states that information cannot travel faster than the speed of light. Hence, if we are guaranteed that sites $X$ and $Y$ are at some well-defined distance we can calculate the minimum time it takes for a message to travel from $X$ to $Y$ (or \emph{vice versa}). This motivates \emph{guaranteed message delivery time} models, in which transmitting messages between certain parties takes a finite amount of time. To the best of our knowledge, these were the first models in which relativistic bit commitment was proposed~\cite{kent99, kent05} (please refer to Appendix~\ref{app:gmdt} for a brief summary of what is known about these models). Special relativity can also motivate certain split models as explained below.
%

We consider the model proposed by Kent~\cite{kent11, kent12a}. Take the speed of light to be $1$, let $(x, t)$ be the coordinates for Minkowski space and define the following three points : $P = (0, 0)$, $Q = (- 1, 1)$, $R = (1, 1)$. It is clear that $P$ is the latest point that belongs to the common past of $Q$ and $R$ (Fig.~\ref{fig:cones}). Hence, no signal emitted after $t = 0$ (regardless of where it was emitted from) can reach both $Q$ and $R$. Kent's bit commitment protocols take advantage of this scenario by assuming that each party has an agent at $P$, $Q$ and $R$ and they are allowed to send information at the speed of light. The commit phase happens at $P$ while the open phase happens at $Q$ and $R$. The resulting communication constraints are illustrated in Fig.~\ref{fig:kentsplit}. It is clear that the communication constraints following from this configuration in space-time are strictly stronger than those of the $\beta$-split. This serves as a proof of principle that at least certain split models can be physically realised by requiring different parts of the protocol to take place at different, space-like separated points.
\splits{1}{4}{4}{0.65}{Effective communication constraints imposed by Kent's model~\cite{kent11, kent12a}.}{fig:kentsplit}
\section{Bit commitment protocols for the minimal splits}
\label{sec:minimal-splits}
In Section \ref{sec:no-go-and-split} we argued that either $\alpha$ or $\beta$-split needs to be imposed for security to be possible. In this section we give explicit examples of protocols which are secure in each of the two cases.
\subsection{Protocols based on $\alpha$-split}
\label{sec:alphasep}
\splits{2}{2}{1}{0.65}{The $\alpha$-split model: Alice is required to be split during the commit and wait phases.}{}
The $\alpha$-split allows for a simple bit commitment protocol based on secret sharing. Such protocols will have the feature that once the commit phase is over, the combined systems of Alice and Amy determine the committed bit and the commitment only lasts as long as the separation is maintained. This is similar to the distributed oblivious transfer scenarios \cite{naor00} in which security disappears as soon as the agents are allowed to communicate.
\begin{prot}{Bit commitment from secret sharing}
\label{prot:secshar}
\begin{enumerate}
\item (commit) Bob commits to $b \in \{0, 1\}$ by generating a random bit $r$ and sending $b \oplus r$ to Alice and $r$ to Amy.
\item (open) Alice and Amy calculate $b = (b \oplus r) \oplus r$.
\end{enumerate}
\end{prot}
Security against classical adversaries follows directly from the properties of secret-sharing. It is also secure against quantum adversaries (see Appendix~\ref{app:secshasec} for details). As there exists a classical protocol that is perfectly secure (even against quantum adversaries) in this scenario quantum mechanics gives us no advantage for the purpose of bit commitment.
\subsection{Protocols based on $\beta$-split}
\label{sec:betasep}
\splits{1}{3}{3}{0.65}{The $\beta$-split model: Bob is required to be split during the wait and open phases.}{}
In contrast to the $\alpha$-split case commitments based on the $\beta$-split can be made permanent\,---\,Bob and Brian can always refuse to participate in the open phase and Alice will learn nothing about their commitment. As discussed in Section~\ref{sec:no-go-and-split} we need to distinguish between the local and global command models.
\subsubsection{Security in the local command model}
\label{sec:bobsplitopen}
It turns out that in the $\beta$-split model under the local command there exists a simple classical protocol that achieves security.
\begin{prot}{Bit commitment in the local command model}
\label{prot:trivlocal}
\begin{enumerate}
\item (commit) Bob chooses a bit $b$ and shares it with Brian.
\item (open) Bob and Brian independently send to Alice a bit they claim to have committed to (denote these bits by $x$ and $y$, respectively).
\item (verify) Alice accepts the commitment of $b$ if $b = x = y$, else she rejects.
\end{enumerate}
\end{prot}
It is easy to convince ourselves that the protocol is secure (according to the weakly binding definition). The problem that Bob and Brian face is to correlate the bits they are trying to unveil. In order to do that they either have to agree on the bit in advance (which corresponds to an honest commitment) or they would have to violate no-signalling. For a more detailed security analysis we refer to Appendix~\ref{app:loccomsec} (see also the independent discussion of this and related points in~\cite{kent12b}).

\subsubsection{Security in the global command model}
We have seen that in the local command model there exists a very simple bit commitment protocol that achieves security. Unfortunately, as soon as we switch to the global command the protocol becomes insecure\,---\,Bob and Brian can cheat perfectly. Let us consider what is and what is not possible in the $\beta$-split model under the global command.
\paragraph{Classically}
Classically, it is not possible to achieve security in the $\beta$-split model under the global command and the informal argument goes as follows. As the protocol needs to be correct Bob and Brian must be able to honestly commit to either bit, i.e.~they must be able to agree on unveiling strategies\footnote{Bob and Brian agree on unveiling strategies during the commit phase, which they are allowed in the $\beta$-split model. This argument might not apply in the case of stronger splits (e.g.~Bob and Brian split at all times).} that will make Alice accept either bit even without any further communication between Bob and Brian. Since the protocol is hiding the interaction during the commit phase cannot give away any information about the committed bit and, therefore, both strategies remain valid until the beginning of the open phase. Hence, whichever bit Bob and Brian are told to unveil they can always succeed.
\paragraph{Quantum mechanically}
The informal argument presented above does not apply in the quantum world due to the no-cloning principle. The opening strategy may rely on some quantum system that is available to Bob right before the split\,---\,but cannot be shared with Brian without loss. The first protocols in the $\beta$-split model were proposed by Kent~\cite{kent11, kent12a} and Section~\ref{sec:transmitting-measurement-outcomes} focuses on one of them.
\section{Bit commitment by transmitting measurement outcomes}
\label{sec:transmitting-measurement-outcomes}
We introduce a variant of the bit commitment protocol by Kent~\cite{kent12a} and then present a security proof that leads to explicit security bounds. 
\subsection{The protocol}
The original protocol presented in~\cite{kent12a} uses BB84 states. However, for the purpose of the proof we analyse its purified analogue (which is equivalent from the security point of view). Denote the computational basis by $\cB_{0} = \{\ket{0}, \ket{1}\}$ and the Hadamard basis by $\cB_{1} = \{\ket{+}, \ket{-}\}$.

Note also that the original scenario described by Kent makes strictly more assumptions (because it requires both parties to be split rather than just one). However, we will see that whether Alice is split or not does not affect the security. Hence, the security proof for the $\beta$-split model presented here automatically applies to the setup originally proposed by Kent.

\begin{prot}{Bit commitment by transmitting measurement outcomes}
\label{prot:trans}
\begin{enumerate}
\item Alice creates $2n$ EPR pairs and sends one half of each pair to Bob.
\item (commit) Bob commits to a bit $b$ by measuring every qubit he receives in $\cB_{b}$. Denote the outcomes by $T$ (a classical bit string of length $2n$).
\item (end of commit) Bob splits up into two agents: Bob and Brian. Each of them holds a copy of $T$. {\bf No more communication is allowed between Bob and Brian until the end of the protocol.}
\item (open) Bob opens the commitment by sending $b$ and $T$ to Alice. Brian does the same.
\item Alice picks a random subset $\cZ \subset [2n]$ of size $n$ and let $\cX := [2n] \setminus \cZ$. She measures the qubits from $\cZ$ in the computational basis and the qubits from $\cX$ in the Hadamard basis. Denote her measurement outcomes by $S$ (a classical bit string of length 2n).
\item (verify) Alice performs three checks :
\begin{itemize}
\item Alice checks whether the values of $b$ submitted by Bob and Brian are the same.
\item Alice checks whether the strings submitted by Bob and Brian are the same.
\item Alice checks whether the strings submitted are consistent with $S$ (consistency check on qubits she measured in $\cB_{b}$ only).
\end{itemize}
If all three checks pass then the opening is accepted.
\end{enumerate}
\end{prot}
As mentioned in Section~\ref{sec:bitcom} a secure bit commitment protocol should satisfy three conditions. If Bob is honest he will choose a bit $b$, perform the correct measurement to obtain the (classical) string $T$. After the split Bob and Brian will both possess identical copies of $b$ and $T$, which they send to Alice during the open phase. Hence, the first two checks clearly go through. The third check goes through because honest Alice prepared perfect EPR pairs, measured them to obtain string $S$ and so strings $S$ and $T$ must be perfectly correlated on the qubits measured in the same basis. Hence, the protocol is perfectly correct. Security for honest Bob is also easy to see. Alice does not receive any information before the open phase, hence, she cannot learn anything about Bob's commitment by no-signalling and the protocol is $\delta$-hiding for $\delta = 0$. Therefore, we only analyse security for honest Alice, i.e.~show the following result:
\begin{thm}
\label{thm:mainresult}
Protocol~\ref{prot:trans} in the $\beta$-split model under the global command is $\varepsilon$-weakly binding, where
\begin{equation*}
\varepsilon = \inf_{\delta \in (0, \frac{1}{2}) } 2^{1 - n(1 - h(\delta))} + 2 \exp \left(- \frac{1}{2} n \delta^{2} \right),
\end{equation*}
where $h(\cdot)$ is the binary entropy function as defined in Section~\ref{sec:prelim-probdist}.
\end{thm}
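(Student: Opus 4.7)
The plan is to follow the template of one-shot QKD-style security proofs built around the entropic uncertainty relation of Theorem~\ref{thm:uncertainty}. First, by a Stinespring dilation, I would assume without loss of generality that the cheating opening maps for bit $b$ are POVMs $\{M^b_t\}_{t \in \{0,1\}^{2n}}$ on $B$ and $\{N^b_t\}_{t \in \{0,1\}^{2n}}$ on $B'$ that directly output the $2n$-bit strings $T_b$ and $T'_b$ sent to Alice. The string-match requirement $T_b = T'_b$ together with the consistency check on Alice's random subset, after dropping one of them, yields
\begin{equation*}
p_0 \leq \Pr\bigl[T'_0|_{\cZ} = S^Z|_{\cZ}\bigr] \quad\text{and}\quad p_1 \leq \Pr\bigl[T_1|_{\cX} = S^X|_{\cX}\bigr],
\end{equation*}
where $S^Z$ and $S^X$ are the (virtual) strings Alice would obtain by measuring all of $A$ in $\cB_0$ or $\cB_1$ respectively. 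Using Brian's string on the first side and Bob's on the second is crucial, because the side-information systems appearing in the uncertainty relation then differ.

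Since $\cZ$ is a uniformly random size-$n$ subset of $[2n]$, hypergeometric Hoeffding shows that any fixed pair of strings of full Hamming distance $d > \delta n$ survives the random restriction with probability at most $\exp(-\tfrac{1}{2}n\delta^2)$. This gives
\begin{equation*}
p_b \leq \Pr\bigl[\dham(Y_b, S_b) \leq \delta n\bigr] + \exp\!\bigl(-\tfrac{1}{2}n\delta^2\bigr)
\end{equation*}
for $(Y_0, S_0) = (T'_0, S^Z)$ and $(Y_1, S_1) = (T_1, S^X)$, so it suffices to bound the sum of the two Hamming-ball probabilities by $1 + 2^{1 - n(1-h(\delta))}$ and then take the infimum over $\delta$.

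The core step applies Theorem~\ref{thm:uncertainty} to $A$ with the two $2n$-fold tensor-product measurements in the two bases, for which the overlap is $c = 2^{-2n}$, giving $\hmax(S^Z | B') + \hmin(S^X | B) \geq 2n$. I would then translate the two Hamming-ball events into matching bounds on these two entropies. On the $\hmin$ side, the ``decode from a ball'' strategy (apply the measurement $\{M^1_t\}$ and then output a uniformly random element from the Hamming ball around the outcome) gives $p_{\text{guess}}(S^X | B) \geq \alpha_1 / V_\delta$ with ball size $V_\delta \leq 2^{n h(\delta)}$, hence $\hmin(S^X | B) \leq \log(V_\delta/\alpha_1)$. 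On the $\hmax$ side, a Cauchy–Schwarz bound applied to the support-based formula $2^{\hmax(X|Y)} = \sum_y P_Y(y)\bigl(\sum_x \sqrt{P_{X|Y=y}(x)}\bigr)^2$, with each inner sum split into its in-ball and out-of-ball contributions, produces a parallel upper bound in terms of $\alpha_0$ and $V_\delta$. Substituting both into the uncertainty relation yields the required inequality on the sum of ball probabilities, and combining with the concentration term completes the proof.

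The main technical obstacle lies in the max-entropy step: Theorem~\ref{thm:uncertainty} is stated with the non-smooth $\hmax$, while Hamming-ball arguments most naturally yield smooth-entropy bounds. The in-ball/out-of-ball Cauchy–Schwarz splitting must therefore be carried out carefully so that the loose $2^{n}$-size support bound on the out-of-ball event does not dominate and destroy the $2^{-n(1-h(\delta))}$ scaling that the theorem asserts.
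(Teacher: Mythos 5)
Your proposal diverges from the paper's argument at a structural level, and the issue you flag at the end as ``the main technical obstacle'' is in fact a genuine gap, not a finishing detail. You aim to control $p_0$ and $p_1$ directly, one per side of the uncertainty relation $\hmax(S^Z \mid B') + \hmin(S^X \mid B) \geq 2n$ applied to the pre-open state $\sigma_{ABB'}$. The min-entropy side does what you want: Bob's ball-decoding strategy gives $p_{\textnormal{guess}}(S^X \mid B) \geq \alpha_1/V_\delta$, hence an upper bound on $\hmin(S^X \mid B)$. But the max-entropy side cannot play the symmetric role. From the uncertainty relation, a large $\alpha_1$ yields a \emph{lower} bound on $\hmax(S^Z \mid B')$; to conclude you would need this to force $\alpha_0$ to be small. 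That implication fails: a distribution can have near-maximal max-entropy while still placing a constant fraction of its mass in a single Hamming ball (a high guessing probability controls $\hmin$, not $\hmax$, and $\hmin \leq \hmax$). The Cauchy--Schwarz manipulation you sketch does not close the gap: the support-based formula you invoke is stated for classical side information, whereas your conditioning system $B'$ is quantum; and even in the classical case, splitting the inner sum into in-ball and out-of-ball contributions bounds $\hmax$ from below, not $\alpha_0$ from above.

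Your plan also drops the paper's Lemma~\ref{lem:nosign}, which uses no-signalling between Bob and Brian to reduce $p_0 + p_1 \leq 1 + \alpha$, where $\alpha$ is the probability of the single ``cross'' event in which Bob unveils $0$, Brian unveils $1$, and both pass Alice's checks. This reduction is essential, because $p_0$ and $p_1$ live in two distinct runs of the protocol (Victor commands $0$ or $1$) and cannot be joined into one experiment by the uncertainty relation alone. After that reduction, the paper conditions on Bob's $\cZ$-check passing and applies Theorem~\ref{thm:uncertainty} to the post-selected tripartite state $\rho^{\textnormal{pass}}_{A_\cX T_\cX B'}$, with Bob's submitted string $T_\cX$ on the max-entropy side. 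That makes $\hmax(\hat{S}_\cX \mid T_\cX)$ a \emph{purely classical} conditional max-entropy; the Hoeffding sampling bound (conditioned on Bob passing, $T_\cX$ is with high probability Hamming-close to $\hat{S}_\cX$) together with $\hmax \leq \textnormal{H}_0$ turns a Hamming-ball volume count into the required upper bound of roughly $n h(\delta)$. This is the idea your plan is missing: place a \emph{classical} register, obtained by conditioning on one agent's check, on the $\hmax$ side, rather than the other agent's quantum system, about which no counting argument can say anything.
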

Note that not only does $\varepsilon$ vanish in the limit $n \to \infty$ but also the rate of decay is exponential in $n$ ($n$ is the number of rounds played, hence, the resources necessary to execute the protocol grow linearly in $n$). The fact that $\varepsilon$ decays exponentially would be a great advantage if the protocol were to be implemented experimentally and shows that the protocol might be of practical interest.
\subsection{Security for honest Alice}
\label{sec:secalice}
%
%
%
%
%
%
%
%
%
\subsubsection{Notation}
Let us denote the state of the system at the end of the commit phase by $\sigma_{AB B'}$, where subsystems $A$, $B$ and $B'$ belong to Alice, Bob and Brian, respectively. Alice is honest so we know the exact state of her subsystem\,---\,it contains $2n$ qubits, which have already been partitioned into sets $\cZ$ and $\cX$. This justifies a natural partition of the subsystem $A$ into subsystems $A_{\cZ}$ and $A_{\cX}$, each containing exactly $n$ qubits. Let quantum operation $\Lambda_{G}^{b}$ for $G \in \{A_{\cZ}, A_{\cX}\}$, $b \in \{0, 1\}$ correspond to measuring all qubits from the subsystem $G$ in the basis $\cB_{b}$. The relevant projectors can be formally defined as
\begin{equation}
P_{G}^{b, s} := [H^{\otimes n}]^{b} \ketbra{s}_{G} [H^{\otimes n}]^{b},
\end{equation}
where $s \in \{0, 1\}^{n}$. Denote the environment by $E$ and the subsystem used to store the measurement outcomes by $F$. Then $\Lambda_{G}^{b}$ is defined as
\begin{equation*}
\label{eq:meas1}
\rho_{FE} := \Lambda_{G}^{b} (\rho_{GE}) = \sum_{s} \ketbra{s}_{F} \otimes \tr_{G} \big( P_{G}^{b, s} \rho_{GE} \big).
\end{equation*}
The three relevant measurements are $\Lambda_{A_{\cZ}}^{0}, \Lambda_{A_{\cX}}^{1}, \Lambda_{A_{\cX}}^{0}$\,---\,the first two are actually performed in the honest protocol, while the third one is a \emph{virtual} measurement, required for the proof only. Bob and Brian are expected to extract a string from their respective quantum systems. Let us simplify the notation introduced in Section~\ref{sec:bitcom} and denote Bob's map intending to open $b$ and producing string $T$ as the output by $\Phi_{B}^{b}$. Similarly, for Brian denote the map intending to open $b'$ by $\Phi_{B'}^{b'}$ and the output string by $T'$. Observe that $\Phi_{B}^{b}$ ($\Phi_{B'}^{b'}$) is restricted to operate on the subsystem $B$ ($B'$) only. The string $T$ corresponds to measuring all $2n$ qubits. Once Alice has chosen the partition into $\cZ$ and $\cX$ we can naturally split it into two substrings $T = \{T_{\cZ}, T_{\cX}\}$, which correspond to the outcomes obtained from the qubits from sets $\cZ$ and $\cX$, respectively. Splitting $T$ into two substrings is useful because when Alice has to decide whether to accept or reject the commitment she will only look at one of the substrings (the one measured in the same basis). Clearly, analogous partition applies to $T' = \{T'_{\cZ}, T'_{\cX}\}$.
\subsubsection{No-signalling constraints}
\label{sec:nosign}
Let us think of Alice as talking to Bob and Brian separately and making a separate decision (whether to accept or not) for each of them. We can see that this gives rise to a joint probability distribution with two inputs and two outputs: the inputs are the bits that Bob and Brian were asked by Victor\footnote{We are in the global command model so both Bob and Brian know what they are trying to unveil.} to unveil ($b$ and $b'$, respectively), while the outputs are Alice's binary ($\{$accept, reject$\}$) outcomes (one on each side). We have already defined the maps that Bob and Brian will apply so now we just need to specify what the tests on Alice's side are. As described in the protocol Alice will check whether the relevant substring (determined by the partition into $\cZ$ and $\cX$) is identical to her measurement outcomes and these checks can be expressed as projectors. E.g. if Bob tries to open $b = 0$ ($b = 1$) Alice will apply $\Pi_{B}^{0}$ ($\Pi_{B}^{1})$, where
\begin{align*}
\Pi_{B}^{0} &:= \sum_{s} \ketbra{s}_{S_{\cZ}} \otimes \ketbra{s}_{T_{\cZ}},\\
\Pi_{B}^{1} &:= \sum_{s} \ketbra{s}_{S_{\cX}} \otimes \ketbra{s}_{T_{\cX}}.
\end{align*}
To check Brian's opening she would apply $\Pi_{B'}^{0}$ or $\Pi_{B'}^{1}$, which can be obtained from the projectors above by replacing $T$ with $T'$. Note that the opening maps performed by Bob and Brian and the tests performed by Alice allow us to evaluate the joint probability distribution, which is represented in Table~\ref{tab:no-signalling}.\footnote{The variables that do not appear in our argument have been replaced with placeholders.} As Bob and Brian act on disjoint quantum systems and the tests performed by Alice are classical the probability distribution of outcomes must satisfy no-signalling.
\begin{table}
\centering
\begin{tabular}{ c c c || c | c || c | c ||}
	& & & \multicolumn{4}{|c||}{Alice and Brian}\\
	& & & \multicolumn{2}{|c}{$b' = 0$} & \multicolumn{2}{c||}{$b' = 1$}\\
	& & & accept & reject & reject & accept\\
	\hline \hline \multirow{4}{*}{Alice and Bob} & \multirow{2}{*}{$b = 0$} & accept & $p_{0}$ & $a_{12}$ & $\cdot$ & $ \alpha $\\
  \cline{3-7} & & reject & $a_{21}$ & $a_{22}$ & $a_{23}$ & $a_{24}$\\
  \hhline{~~=====} & \multirow{2}{*}{$b = 1$} & reject & $\cdot$ & $\cdot$ & $\cdot$ & $a_{34}$\\
  \cline{3-7} & & accept & $\cdot$ & $\cdot$ & $\cdot$ & $p_{1}$\\
	\hline \hline
\end{tabular}
\caption{The joint probability distribution describing the two space-like separated openings.}
\label{tab:no-signalling}
\end{table}
%
%
Note that we replaced certain fields ($a_{11}$ and $a_{44}$) by the probability of successfully opening $0$ and $1$ ($p_{0}$ and $p_{1}$), respectively. This follows from the definition of $p_{d}$ in the global command model :
\begin{equation}
\label{eq:pddef}
p_{d} := \Pr[\mbox{accept}, \mbox{accept} | b = d, b' = d].
\end{equation}
Also, we have replaced $a_{14}$ by $\alpha$ because it turns out to be the quantity we will bound in the second part of the proof. The following lemma uses the no-signalling principle to find an upper bound on the sum of $p_{0}$ and $p_{1}$.
\begin{lem}
\label{lem:nosign}
No-signalling between Bob and Brian implies that $p_{0} + p_{1} \leq 1 + \alpha$.
\end{lem}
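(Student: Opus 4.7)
The plan is to recognise Table~\ref{tab:no-signalling} as a bipartite behaviour $q(u,v \mid b,b')$ with inputs $b,b' \in \{0,1\}$ (what Victor asks Bob and Brian to unveil) and outputs $u,v \in \{\text{accept},\text{reject}\}$ (Alice's two verdicts). Because Bob's map $\Phi_B^b$ acts only on subsystem $B$, Brian's map $\Phi_{B'}^{b'}$ acts only on $B'$, and Alice's checks $\Pi_B^b$ and $\Pi_{B'}^{b'}$ are local classical tests on disjoint registers, the distribution $q$ cannot signal from $b$ to $v$ nor from $b'$ to $u$.

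The first step is to convert $p_0 = q(\text{A},\text{A} \mid 0,0)$ and $p_1 = q(\text{A},\text{A} \mid 1,1)$ into bounds involving only the column/row containing $\alpha = q(\text{A},\text{A} \mid 0,1)$. Trivially $p_0 \leq q_u(\text{A} \mid b=0,b'=0)$, where $q_u$ is Bob's marginal verdict. By no-signalling on Bob, $q_u(\text{A} \mid 0,0) = q_u(\text{A} \mid 0,1) = \alpha + q(\text{A},\text{R} \mid 0,1)$. Analogously, using no-signalling on Brian's marginal (fixing $b'=1$ and equating $b=1$ with $b=0$), one gets $p_1 \leq q_v(\text{A} \mid 0,1) = \alpha + q(\text{R},\text{A} \mid 0,1)$.

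The second step is to add the two inequalities:
\begin{equation*}
p_0 + p_1 \;\leq\; 2\alpha + q(\text{A},\text{R} \mid 0,1) + q(\text{R},\text{A} \mid 0,1).
\end{equation*}
Since the four entries of the column $(b,b')=(0,1)$ are a probability distribution,
\begin{equation*}
q(\text{A},\text{R} \mid 0,1) + q(\text{R},\text{A} \mid 0,1) \;\leq\; 1 - q(\text{A},\text{A} \mid 0,1) \;=\; 1 - \alpha,
\end{equation*}
and plugging this in yields $p_0 + p_1 \leq 1 + \alpha$, as claimed.

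There is no real obstacle here; the only thing to be careful about is justifying that the bipartite behaviour built from the openings really satisfies no-signalling in the global command model. This follows from the tensor product structure $\Phi_B^b \otimes \Phi_{B'}^{b'}$ of the cheating maps on the split state $\sigma_{ABB'}$ and from the fact that Alice's two projective checks $\Pi_B^b$ and $\Pi_{B'}^{b'}$ commute and act on disjoint classical registers, so the marginal on one verdict is obtained by tracing out the other party's system and is independent of that party's input.
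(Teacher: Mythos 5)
Your proof is correct and is essentially the same argument as the paper's: both derive the bound from two no-signalling identities on the marginals of the behaviour in Table~\ref{tab:no-signalling}, combined with normalization and non-negativity of the entries. You route the accounting through the normalization of the $(b,b')=(0,1)$ quarter, whereas the paper uses the $(0,0)$ quarter and then drops non-negative terms; the difference is purely bookkeeping.
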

\begin{proof}
Consider the following no-signalling constraints : $\alpha + a_{24} = a_{34} + p_{1}$ and $a_{21} + a_{22} = a_{23} + a_{24}$. Moreover, we know that each quarter adds up to $1$ so $p_{0} + a_{12} + a_{21} + a_{22} = 1$. Combining the two conditions gives
\begin{equation*}
p_{0} + p_{1} = 1 - a_{12} - a_{21} - a_{22} + \alpha + a_{24} - a_{34} = 1 - a_{12} - a_{23} + \alpha - a_{34} \leq 1 + \alpha. \qedhere
\end{equation*}
\end{proof}
Hence, it is enough to show that as the number of rounds $n$ increases $\alpha$ can be made arbitrarily small, which is the focus of the next section.
\subsubsection{Impossibility of guessing both strings}
\label{sec:impos}
%
%
%
%
%
%
The probability $\alpha$ corresponds to Bob trying to unveil $b = 0$, Brian trying to unveil $b' = 1$ and both openings being accepted. Let $\rho_{S_{\cZ} S_{\cX} T_{\cZ} T_{\cX} T'_{\cZ} T'_{\cX}}$ be the state after all three parties have performed their measurements (note that this state is purely classical)
\begin{equation*}
\rho_{S_{\cZ} S_{\cX} T_{\cZ} T_{\cX} T'_{\cZ} T'_{\cX}} := (\Lambda_{A_{\cZ}}^{0} \otimes \Lambda_{A_{\cX}}^{1} \otimes \Phi_{B}^{0} \otimes \Phi_{B'}^{1}) \rho_{A_{\cZ} A_{\cX} B B'}.
\end{equation*}
As $\alpha$ is the probability that $\rho_{S_{\cZ} S_{\cX} T_{\cZ} T_{\cX} T'_{\cZ} T'_{\cX}}$ passes the relevant tests it can be written as
\begin{equation}
\label{eq:alphaproj}
\alpha = \tr(\Pi_{B}^{0} \Pi_{B'}^{1} \rho_{S_{\cZ} S_{\cX} T_{\cZ} T_{\cX} T'_{\cZ} T'_{\cX}}).
\end{equation}
As operators acting on disjoint subsystems commute we can change the order slightly
\begin{align*}
\alpha &= \tr(\Pi_{B}^{0} \Pi_{B'}^{1} \rho_{S_{\cZ} S_{\cX} T_{\cZ} T_{\cX} T'_{\cZ} T'_{\cX}})\\
&= \tr(\Pi_{B}^{0} \Pi_{B'}^{1} (\Lambda_{A_{\cZ}}^{0} \otimes \Lambda_{A_{\cX}}^{1} \otimes \Phi_{B}^{0} \otimes \Phi_{B'}^{1}) \rho_{A_{\cZ} A_{\cX} B B'})\\
&= \tr(\Pi_{B}^{0} \Pi_{B'}^{1} (\Lambda_{A_{\cX}}^{1} \otimes \Phi_{B'}^{1}) (\Lambda_{A_{\cZ}}^{0} \otimes \Phi_{B}^{0}) \rho_{A_{\cZ} A_{\cX} B B'})\\
&= \tr \left( \Pi_{B'}^{1} (\Lambda_{A_{\cX}}^{1} \otimes \Phi_{B'}^{1}) \left[ \Pi_{B}^{0} (\Lambda_{A_{\cZ}}^{0} \otimes \Phi_{B}^{0}) \rho_{A_{\cZ} A_{\cX} B B'} \right] \right).
\end{align*}
Define
\begin{gather*}
\label{eq:prob}
p := \tr \big(\Pi_{B}^{0} (\Lambda_{A_{\cZ}}^{0} \otimes \Phi_{B}^{0}) \rho_{A_{\cZ} A_{\cX} B B'} \big),\\
\rho_{A_{\cX} T_{\cX} B'}^{\textnormal{pass}} := \frac{1}{p} \tr_{S_{\cZ} T_{\cZ}} \Big[ \Pi_{B}^{0} (\Lambda_{A_{\cZ}}^{0} \otimes \Phi_{B}^{0}) \rho_{A_{\cZ} A_{\cX} B B'} \Big].
\end{gather*}
It is easy to see that $p$ is the probability that Bob passes his test and $\rho_{A_{\cX} T_{\cX} B'}^{\textnormal{pass}}$ is the normalised state conditioned on passing. Hence, $\alpha$ can be written as
\begin{equation}
\label{eq:alphaproj2}
\alpha = \tr \big( \Pi_{B'}^{1} (\Lambda_{A_{\cX}}^{1} \otimes \Phi_{B'}^{1}) \rho_{A_{\cX} T_{\cX} B'}^{\textnormal{pass}} \big) \cdot p.
\end{equation}
This way of writing $\alpha$ allows us to apply Theorem~\ref{thm:uncertainty} to the tri-partite state $\rho_{A_{\cX} T_{\cX} B'}^{\textnormal{pass}}$.

\begin{lem}
\label{lem:alphalemma}
For any strategy adopted by dishonest Bob
\begin{equation}
\label{eq:alphalemma}
\alpha \leq \inf_{\delta \in (0, \frac{1}{2}) } 2^{1 - n(1 - h(\delta))} + 2 \exp \left(- \frac{1}{2} n \delta^{2} \right).
\end{equation}
\end{lem}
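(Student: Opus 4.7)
The plan is to bound $\alpha$ by combining the uncertainty relation (Theorem~\ref{thm:uncertainty}) with a classical sampling argument applied to Alice's random partition $\cZ$. Starting from the factorisation~\eqref{eq:alphaproj2}, observe that the factor $\tr(\Pi_{B'}^{1}(\Lambda_{A_{\cX}}^{1} \otimes \Phi_{B'}^{1}) \rho_{A_{\cX} T_{\cX} B'}^{\textnormal{pass}})$ is precisely the probability that the string Brian extracts from $B'$ agrees with Alice's Hadamard-basis outcome $X$ on $A_{\cX}$, and is therefore upper bounded by $2^{-\hmin(X | B')}$ evaluated on the post-pass state. Applying Theorem~\ref{thm:uncertainty} to $\rho_{A_{\cX} T_{\cX} B'}^{\textnormal{pass}}$ with the two measurements $\Lambda_{A_{\cX}}^{0}$ (outcome $Z$) and $\Lambda_{A_{\cX}}^{1}$ (outcome $X$)---whose overlap parameter is $c = 2^{-n}$ since the $n$-fold computational and Hadamard bases are mutually unbiased---yields $\hmin(X | B') \geq n - \hmax(Z | T_{\cX})$, reducing the problem to controlling $\hmax(Z | T_{\cX})$ in the post-pass state.

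To control $\hmax(Z | T_{\cX})$, I would introduce a \emph{virtual} experiment in which Alice additionally measures $A_{\cX}$ in the computational basis, obtaining the classical string $Z$, and then invoke a sampling inequality for the random subset $\cZ$. Let $D$ denote the set of positions where the full computational-basis string $(S_{\cZ}, Z)$ disagrees with Bob's declared string $T$: since Bob's check passes precisely when $D \cap \cZ = \emptyset$, and $\cZ$ is a uniformly random subset of $[2n]$ of size $n$ chosen independently of $T$, a Hoeffding-type concentration bound for sampling without replacement gives $\Pr[|D| > \delta n \text{ and pass}] \leq \exp(-n\delta^{2}/2)$. Dividing by $p$ produces a conditional-on-pass statement: $\dham(Z, T_{\cX}) \leq \delta n$ holds with probability at least $1 - \exp(-n\delta^{2}/2)/p$ in the post-pass state.

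The final step converts this concentration into a max-entropy bound. Writing $2^{\hmax(Z | T_{\cX})} = \sum_{t} P_{T_{\cX}}(t) (\sum_{z} \sqrt{P_{Z | T_{\cX}=t}(z | t)})^{2}$ and splitting the inner sum according to whether $\dham(z, t) \leq \delta n$, I would apply $(a + b)^{2} \leq 2a^{2} + 2b^{2}$ and then Cauchy--Schwarz on each piece. The ``good'' piece is bounded by twice the Hamming-ball volume $2 \cdot 2^{n h(\delta)}$, while the ``bad'' piece is bounded by $2 \cdot 2^{n}$ times its probability. Collecting everything, $\alpha \leq p \cdot 2^{\hmax(Z | T_{\cX}) - n} \leq 2 \cdot 2^{-n(1 - h(\delta))} + 2 \exp(-n\delta^{2}/2)$. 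The factor $1/p$ arising when conditioning on passing cancels the factor $p$ appearing in~\eqref{eq:alphaproj2}, yielding a bound independent of Bob's cheating probability; taking the infimum over $\delta \in (0, 1/2)$ completes the proof.

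I expect the main technical delicacy to be the interplay between the state-level uncertainty relation and the probabilistic sampling argument: one has to verify that the Hoeffding-style bound genuinely applies in the quantum setting (the measurements producing $S_{\cZ}$, $Z$ and $T$ act on disjoint subsystems and commute, so the classical probabilistic reasoning carries over), and that combining the two ingredients introduces no spurious prefactors beyond the ``$2$'' visible in the statement. A smooth-entropy version of the uncertainty relation would give a cleaner conceptual statement, but the direct split-and-bound approach outlined above seems simplest for obtaining the specific explicit bound claimed in the lemma.
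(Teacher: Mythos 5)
Your proposal is correct and follows essentially the same route as the paper: factorise $\alpha$ through the post-pass state, apply the uncertainty relation of Theorem~\ref{thm:uncertainty} with $c=2^{-n}$ to trade $\hmin(X|B')$ for $\hmax(Z|T_\cX)$ under a virtual computational-basis measurement, and then use Hoeffding sampling over the random partition $\cZ$ to control the Hamming distance between $Z$ and $T_\cX$ conditioned on passing. The only divergence is the final conversion of the concentration bound into a max-entropy bound: the paper introduces an explicit binary register $\Gamma$ flagging whether the Hamming distance is small, applies the classical decomposition of $\hmax(\cdot|T_\cX,\Gamma)$, and then removes $\Gamma$ by the chain rule (conditioning on one bit costs at most one bit), which is where its overall factor of $2$ comes from; you instead split the inner sum $\sum_z\sqrt{P(z|t)}$ directly into a Hamming-ball piece and a tail piece, apply $(a+b)^2\le 2a^2+2b^2$ and Cauchy--Schwarz, and the same factor of $2$ appears from the elementary inequality. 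Both routes are mathematically equivalent and yield the identical bound; your variant is slightly more self-contained in that it avoids invoking the one-bit chain-rule proposition, while the paper's is slightly more modular in reusing standard properties of $\hmax$. The subtleties you anticipate (commuting measurements on disjoint systems justifying the classical probabilistic reasoning, the cancellation of $p$) are exactly the ones the paper handles, and your handling is sound.
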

\begin{proof}
The trace on the right hand side of~\eqref{eq:alphaproj2} corresponds to the probability that Brian guesses $S_{\cX}$ correctly by applying his opening map on his subsystem \emph{conditioned} on Alice accepting Bob's opening. The guessing probability using a fixed map $\Phi_{B'}^{1}$ is upperbounded by the optimal guessing probability \cite{koenig08} which can be written in terms of the min-entropy. Hence,
\begin{equation}
\label{eq:alpha-minentropy}
\frac{\alpha}{p} = \tr \big( \Pi_{B'}^{1} (\Lambda_{A_{\cX}}^{1} \otimes \Phi_{B'}^{1}) \rho_{A_{\cX} T_{\cX} B'}^{\textnormal{pass}} \big) \leq 2^{- \hmin(S_{\cX} | B')},
\end{equation}
where the min-entropy is evaluated on the state $\rho_{S_{\cX} B'} := \tr_{T_{\cX}} \Lambda_{A_{\cX}}^{1} (\rho_{A_{\cX} T_{\cX} B'}^{\textnormal{pass}})$. To use the uncertainty relation~\eqref{eq:uncertainty} we also need to consider $\rho_{\hat{S}_{\cX} T_{\cX}} := \tr_{B'} \Lambda_{A_{\cX}}^{0} (\rho_{A_{\cX} T_{\cX} B'}^{\textnormal{pass}})$, which would be obtained if Alice decided to make the third (virtual) measurement in a complementary basis. Combining~\eqref{eq:uncertainty} with~\eqref{eq:alpha-minentropy} gives
\begin{equation}
\label{eq:alpha-maxentropy}
\frac{\alpha}{p} \leq 2^{\hmax( \hat{S}_{\cX} | T_{\cX}) - n},
\end{equation}
where $\hmax( \hat{S}_{\cX} | T_{\cX})$ is evaluated on $\rho_{\hat{S}_{\cX} T_{\cX}}$. Note that now we just need to bound the classical conditional max-entropy between two classical random variables (the state $\rho_{\hat{S}_{\cX} T_{\cX}}$ is purely classical). It turns out that it is enough to show that the Hamming distance between $\hat{S}_{\cX}$ and $T_{\cX}$ is small with high probability. To get such a bound we need to examine the (fully classical) state $\rho_{S_{\cZ} \hat{S}_{\cX} T_{\cZ} T_{\cX}} := \tr_{B'} \big[ (\Lambda_{A_{\cZ}}^{0} \otimes \Lambda_{A_{\cX}}^{0} \otimes \Phi_{B}^{0} \otimes \id_{B'}) \rho_{ABB'} \big]$. The fact that $\cZ$ and $\cX$ are random subsets of $[2n]$ allows us to derive the following inequality from the Hoeffding bound~\cite{hoeffding63} (details in Appendix~\ref{app:hoeffder}).
\begin{equation}
\label{eq:hamming}
\textnormal{Pr} \left[ \dham(\hat{S}_{\cX}, T_{\cX}) \geq \delta n \wedge \dham(S_{\cZ}, T_{\cZ}) = 0 \right] \leq \exp \left(- \frac{1}{2} n \delta^{2} \right) =: \varepsilon.
\end{equation}
We can also write it as conditional probability
\begin{equation*}
\textnormal{Pr} \left[ \dham(\hat{S}_{\cX}, T_{\cX}) \geq \delta n | \dham(S_{\cZ}, T_{\cZ}) = 0 \right] \leq \frac{\varepsilon}{p},
\end{equation*}
because $\dham(S_{\cZ}, T_{\cZ}) = 0$ is equivalent to Bob passing the test (and happens with probability $p$ as defined in~\eqref{eq:prob}). Let $0 < \delta < \frac{1}{2}$ and define a binary event, $\Gamma$, such that
\begin{equation*}
\Gamma := 
\begin{cases}
0 \nbox{if}{10} \dham(\hat{S}_{\cX}, T_{\cX}) < \delta n,\\
1 \nbox{if}{10} \dham(\hat{S}_{\cX}, T_{\cX}) \geq \delta n.
\end{cases}
\end{equation*}
If $\Gamma = 0$ then for any particular value of $T_{\cX} = t_{\cX}$ the R{\'e}nyi entropy\footnote{All entropies are evaluated on $\rho_{\hat{S}_{\cX} T_{\cX}}$, except for $\hmin(S_{\cX} | B')$ which is evaluated on $\rho_{S_{\cX} B'}$.} of order $0$ can be bounded by
\begin{equation*}
\textnormal{H}_{0}(\hat{S}_{\cX} | T_{\cX} = t_{\cX}, \Gamma = 0) \leq \log \left( \sum_{i = 0}^{\lfloor n \delta \rfloor} {n \choose i} \right) \leq n h(\delta),
\end{equation*}
where the last inequality comes from a well-known bound (see e.g.\ Lemma 16.19  in~\cite{flum06}). The monotonicity of classical R{\'e}nyi entropies implies that
\begin{equation}
\label{eq:maxentropycond}
\hmax(\hat{S}_{\cX} | T_{\cX} = t_{\cX}, \Gamma = 0) \leq \textnormal{H}_{0}(\hat{S}_{\cX} | T_{\cX} = t_{\cX}, \Gamma = 0).
\end{equation}
If $\Gamma = 1$ then we have no bound better than the maximal value $\hmax(\hat{S}_{\cX} | T_{\cX} = t_{\cX}, \Gamma = 1) \leq n$. It can be shown (see e.g.\ Section 4.3.2 in~\cite{tomamichel12}) that the conditional max-entropy for classical states reduces to
\begin{equation*}
\hmax(Z | Y) := \log \sum_{y \in \cY} \textnormal{Pr}[Y = y] \cdot 2^{\hmax(Z | Y = y)}.
\end{equation*}
As neither of our bounds depends on the particular value of $T_{\cX} = t_{\cX}$, they will not be affected by averaging over all strings $t_{\cX}$. Hence, we only need to average over $\Gamma$
\begin{align}
\label{eq:hmaxfull}
2^{\hmax(\hat{S}_{\cX} | T_{\cX}, \Gamma)} &= \textnormal{Pr}[\Gamma = 0] \cdot 2^{\hmax(\hat{S}_{\cX} | T_{\cX}, \Gamma = 0)} + \textnormal{Pr}[\Gamma = 1] \cdot 2^{\hmax(\hat{S}_{\cX} | T_{\cX}, \Gamma = 1)}\notag \\
&\leq \Big(1 - \frac{\varepsilon}{p} \Big) 2^{n h(\delta)} + \frac{\varepsilon}{p} 2^{n} \leq 2^{n h(\delta)} + \frac{2^{n} \varepsilon}{p}.
\end{align}
One bit of information cannot decrease the entropy by more than 1 bit (see e.g.\ Proposition 5.10 in~\cite{tomamichel12}), hence
\begin{equation}
\label{eq:hmax1bit}
\hmax(\hat{S}_{\cX} | T_{\cX}) \leq \hmax(\hat{S}_{\cX} | T_{\cX}, \Gamma) + 1.
\end{equation}
%
%
%
%
%
%
%
%
%
Hence, from~\eqref{eq:alpha-maxentropy},~\eqref{eq:hmaxfull} and~\eqref{eq:hmax1bit} we get
\begin{equation*}
\alpha \leq 2p \left[ 2^{-n(1 - h(\delta))} + \frac{\varepsilon}{p} \right] \leq 2^{1 - n(1 - h(\delta))} + 2 \exp \left(- \frac{1}{2} n \delta^{2} \right),
\end{equation*}
which directly implies our claim.
\end{proof}
Finally, Theorem \ref{thm:mainresult} follows directly from Lemmas~\ref{lem:nosign} and \ref{lem:alphalemma}.

\section{Conclusions and open questions}
\label{sec:conclusions}
Our interest in bit commitment protocols based on the relativistic constraint was sparked by recent papers by Kent~\cite{kent11, kent12a}. While the author gave an intuition for the security of the protocol based on BB84 states, no explicit security bounds were given. Once we had proven the security of the protocol and calculated such bounds, we became interested in other split models: which of them can give us security and in which of them are quantum protocols more powerful than classical ones? We have investigated the minimal split assumptions that might allow for secure bit commitment and we have shown that they are indeed sufficient. We have found that in the $\beta$-split under the global command quantum protocols are more powerful than classical ones.

We have proven security of bit commitment with respect to the weakly binding definition, which is non-composable. We also know that the usual stronger definition (which would imply composability) is not achievable. We cannot hope for universal composability but maybe it is possible to prove some weaker form of composability. For example, is it possible to combine $n$ bit commitment protocols~\cite{kent12a} to obtain a secure string commitment scheme? If it is not secure one might investigate if there are some extra constraints (e.g.~that the commit phases are executed sequentially or that the unveilings happen simultaneously at space-like separated points) that would guarantee composability.

One might also wonder whether these models allow us to construct other cryptographic primitives. Probably the most natural one to look at would be oblivious transfer \cite{rabin81, wullschleger06}. Unfortunately, the primitive of oblivious transfer requires the security to last forever. This would only be possible if certain parties remained split forever, which cannot be motivated by relativistic assumptions. Moreover, if certain parties were to remain split forever then oblivious transfer can be implemented even classically~\cite{naor00}. It is possible, however, that some weaker form of oblivious transfer (in which the security does not last forever) can be proven secure in relativistic models.
\appendices
\section{Hoeffding bound}
\label{app:hoeffder}
In Lemma~\ref{eq:alphalemma} we need to bound the probability that sampling a small, random substring gives rise to the statistics which is very different from the true statistics of the entire string. The Hoeffding bound is exactly the tool we need. Suppose that we have a string of length $2n$ which contains $n_{\textnormal{err}}$ errors and let $\overline{\Lambda} = \frac{n_{\textnormal{err}}}{2n}$ denote the error fraction in the whole string. Let us take a random sample of the string of size $k$ and denote the error fraction in the sample by $\lambda$. Then, the Hoeffding bound \cite{hoeffding63} states that
\begin{equation*}
{\rm Pr} \left[ \overline{\Lambda} \geq \lambda + \frac{\delta}{2} \right] \leq \exp \left(- \frac{1}{2} k \delta^{2} \right).
\end{equation*}
Adding an extra event cannot increase the probability
\begin{equation*}
{\rm Pr} \left[ \overline{\Lambda} \geq \lambda + \frac{\delta}{2} \wedge \lambda = 0 \right] \leq \exp \left(- \frac{1}{2} k \delta^{2} \right).
\end{equation*}
The expression inside the square bracket can be rewritten, giving us
\begin{equation*}
{\rm Pr} \left[ n_{\textnormal{err}} \geq \delta n \wedge \lambda = 0 \right] \leq \exp \left(- \frac{1}{2} k \delta^{2} \right).
\end{equation*}
This is exactly the bound we use in~\eqref{eq:hamming}.
\section{Composability issues}
For the sake of completeness we state some observations concerning composability. On one hand we show that the weak bindingness definition is not composable (by giving an explicit counter-example). On the other hand we argue that the usual stronger definition~\cite{damgaard07} cannot be satisfied in the split setting.
\subsection{Counter-example to the composability of the weakly binding definition}
\label{app:counterexample}
In Section \ref{sec:bitcom} we explained what it means that a bit commitment protocol is weakly binding and we also said that the definition does not guarantee composability, e.g.\ executing the protocol $n$ times does not necessarily give a secure string commitment (string commitment is an extension of bit commitment in which we are allowed to commit to a bitstring of length $n$ rather than just a single bit). Let us explain what the source of the problem is. Consider a bit commitment protocol which is binding in the sense that with probability $\frac{1}{2}$ Bob can unveil either bit successfully and with probability $\frac{1}{2}$ he will fail regardless of his intentions. Clearly, we would \emph{not} call this protocol secure. However, as $p_{0} = p_{1} = \frac{1}{2}$ it satisfies the $\varepsilon$-weakly binding definition for $\varepsilon = 0$. To expose the problem even further consider the task of string commitment. Analogous to the bit commitment case suppose that at the end of the commit phase Alice and Bob share a state $\rho_{AB}$. Let $q_{s}(\rho_{AB})$ be the probability that Bob successfully unveils string $s$. Then it is natural to say that a string commitment protocol is $\delta$-weakly binding if for all states $\rho_{AB}$ it satisfies
\begin{equation*}
\sum_{s} q_{s}(\rho_{AB}) \leq 1 + \delta.
\end{equation*}
Now consider a string commitment protocol such that Alice with probability $\frac{1}{2}$ accepts anything while with probability $\frac{1}{2}$ rejects everything. It is clear that this is not a secure string commitment box as $\sum_{s} q_{s}(\rho_{AB}) = \frac{1}{2} 2^{n} = 2^{n - 1}$. However, if we look at each bit separately we will find that $p_{0} = p_{1} = \frac{1}{2}$ and so each bit commitment is weakly binding. This shows that combining $n$ weakly binding bit commitments does not imply that the resulting string commitment is secure.
\subsection{Impossibility of satisfying the stronger definition}
\label{sec:strongerdefimpossible}
\begin{df}
\label{df:epsbnd}
\cite{damgaard07} A bit commitment protocol is \emph{$\epsilon$-binding} if the fact that Alice is honest ensures that for any state at the beginning of the open phase, $\rho_{AB}$, there exists an extension of the form
\begin{equation*}
\rho_{ABD} = P_{D}(0) \ketbra{0}_{D} \otimes \rho_{AB}^{0} + P_{D}(1) \ketbra{1}_{D} \otimes \rho_{AB}^{1},
\end{equation*}
where $D$ is a classical register and $P_{D}$ is a probability distribution, for which the conditioned states satisfy $p_{1 - b}(\rho_{AB}^{b}) \leq \epsilon$ for $b \in \{0, 1\}$.
\end{df}
While this definition has proven useful in the bounded and noisy storage models~\cite{damgaard08, wehner08} we argue that it is generally inapplicable outside of these scenarios. The security in these models results from the fact that Alice and Bob cannot purify the protocol, as there is a subsystem, referred to as the environment, $E$, which they do not have access to. In other words $\rho_{AB}$ is not pure because we trace out the environment $E$, e.g. a pure state $\ket{\phi}_{ABE}$ leads to $\rho_{AB} = \tr_{E} \ketbra{\phi}_{ABE}$. The following argument shows that if the model does not prevent the parties from purifying the protocol then Definition~\ref{df:epsbnd} can only be satisfied for $\epsilon \geq \frac{1}{2}$. Suppose that Bob commits to an equal superposition of $0$ and $1$ (as explained above). If Alice and Bob start in a pure state and execute a purified version of the protocol (i.e.\ implement all operations as unitaries, generate coherent randomness and keep all the measurements quantum) then the state at the beginning of the open phase is pure. One possible opening strategy for Bob is to measure the control qubit, which collapses the state. The collapsed state is exactly \emph{as if} Bob had generated a random bit $b$ at the very beginning of the protocol and honestly committed to it. Such a strategy gives us a lower bound on how well Bob can open each bit, namely $p_{b}(\rho_{AB}) \geq \frac{1}{2}$ for $b \in \{0, 1\}$. As the overall state is pure at the beginning of the open phase, any classical register $D$ must necessarily be independent, which means that $\rho_{AB}^{0} = \rho_{AB}^{1} = \rho_{AB}$. Then $p_{1}(\rho_{AB}^{0}) = p_{1}(\rho_{AB}) \geq \frac{1}{2}$ so Definition~\ref{df:epsbnd} can only hold for $\epsilon \geq \frac{1}{2}$. This argument shows that Definition~\ref{df:epsbnd} cannot be satisfied by protocols that do not assume the presence of some external system inaccessible to either party.

\section{Guaranteed message delivery time models}
\label{app:gmdt}
Suppose that Bob, based on Earth, exchanges messages with Alice, who is on the Moon. Special relativity states that no message can travel faster than the speed of light, hence the minimum delivery time equals about $1.26 s$. This scenario motivates the study of models in which there are two separated sites and while intra-site communication can be instantaneous, any inter-site message takes at least $\Delta t$ to be delivered. We also assume that the inter-site (classical or quantum) channels are perfectly secure (neither party can read or alter anything that is on the wire).
\subsection{One agent per site}
\begin{figure}[h]
\centering
\begin{tikzpicture}[scale=1]
	\draw[fill=gray!30] (0, 0) rectangle (2, 1);
	\draw[fill=gray!30] (6, 0) rectangle (8, 1);
	\node (Bob) at (1, 0.5) {Bob};
	\node (Alice) at (7, 0.5) {Alice};
	\draw [thick, ->] (2.2, 0.65) -- (5.8, 0.65);
	\draw [thick, ->] (5.8, 0.35) -- (2.2, 0.35);
	\node (deltat) at (4, 1) {$\Delta t$};
\end{tikzpicture}
\caption{The simplest guaranteed message delivery time model: one agent per site.}
\label{fig:one-agent}
\end{figure}
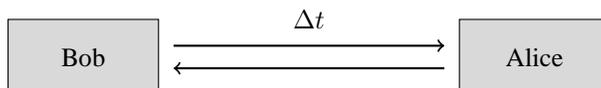
The simplest model (illustrated in Figure~\ref{fig:one-agent}) assumes that each party controls one site. Clearly, if Bob sends a bit $b$ to Alice he is committed to it. The commitment is perfect because at time $t = 0$ Bob is fully committed (he cannot alter his commitment at any later time), while at the same time until time $t = \Delta t$ Alice is fully ignorant about the commitment. The drawback of such a scheme is the fact that the commitment only lasts for $\Delta t$ and then automatically opens. Such schemes have been studied before~\cite{boneh00} but in a slightly different context. The conclusion is that for certain applications (e.g.\ constructing a strong coin flip, signing contracts) such \emph{timed commitments} are good enough, while for others (e.g.~Yao's construction of OT using quantum communication~\cite{yao95, wullschleger06}) they are not. To illustrate the limitations of this model let us consider if it is possible to construct a commitment that lasts for longer that $\Delta t$. Classically, this is not possible and the intuitive argument is simple. In the absence of noise classical protocols are fully deterministic and no probabilities can arise. For each of the bits Bob either \emph{can} ($p_{b} = 1$) or \emph{cannot} ($p_{b} = 0$) unveil it. Hence, the distinction between being and not being committed is sharp (either $p_{0} + p_{1} = 2$ or $p_{0} + p_{1} = 1$). Bob being committed implies that the information beyond his control determines the bit. As Alice will have received all the messages in transit after time at most $\Delta t$ she will be able to learn the committed bit. Therefore, no commitment can be made longer than $\Delta t$. In the quantum world the situation is more complicated due to two things. First of all, quantum mechanics is a probabilistic theory so there is no sharp distinction between being and not being committed\,---\,Bob can be partially committed. The second complication is the no-cloning theorem. Suppose that at some point Bob becomes, to some extent, committed, which means that the information on Alice's side combined with the messages on the wire give away some information about his commitment. Now, assume that Alice waits until the messages arrive (at most $\Delta t$) and does some measurements to learn something about Bob's commitment. Clearly, the standard hiding-binding trade-off applies. However, the honest protocol might require Alice to return some states to Bob before the messages arrive and so by keeping them she takes a risk of being caught cheating. It is an open question if this time-constrained scenario gives us some advantage over the standard scenario for constructing cheat-sensitive bit commitments. It is clear, however, that no secure (hiding) bit commitment can last longer than $\Delta t$. Hence, for this specific purpose quantum and classical protocols are equally powerful.
\subsection{Two agents per site}
\begin{figure}[h]
\centering
\begin{tikzpicture}[scale=1]
	\draw[fill=gray!30] (0, 0) rectangle (2, 1);
	\draw[fill=gray!30] (6, 0) rectangle (8, 1);
	\draw[fill=gray!30] (0, 1.5) rectangle (2, 2.5);
	\draw[fill=gray!30] (6, 1.5) rectangle (8, 2.5);
	\node (alice) at (1, 2) {Alice};
	\node (Amy) at (7, 2) {Amy};
	\node (bob) at (1, 0.5) {Bob};
	\node (Brian) at (7, 0.5) {Brian};
	\draw [thick, ->] (2.2, 2.2) -- (5.8, 2.2);
	\draw [thick, ->] (2.2, 2) -- (5.8, 0.8);
	\draw [thick, ->] (5.8, 2) -- (2.2, 0.8);
	\draw [thick, ->] (5.8, 0.5) -- (2.2, 0.5);
	\node (deltat) at (4, 2.5) {$\Delta t$};
\end{tikzpicture}
\caption{A more complicated guaranteed message delivery time model: two agents per site.}
\label{fig:two-agents}
\end{figure}
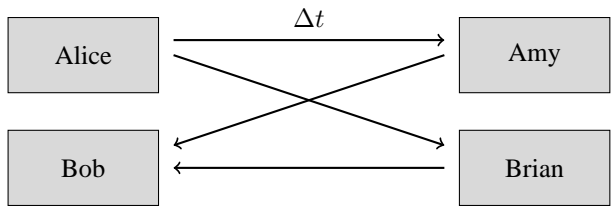
A slighty more complicated model (illustrated in Figure~\ref{fig:two-agents}) assumes that each party has a trusted agent at each site (Bob trusts his agent Brian and Alice trusts her agent Amy). Protocols implementing bit commitment in such a scenario, in which the commitment can be sustained indefinitely as long as messages are exchanged at each site have been presented in~\cite{kent99, kent05}. After the exchange stops the commitment remains valid for $\Delta t$ and then expires. These protocols have been shown to be secure against classical attacks and are conjectured to be secure against any quantum attack.
\section{Classical protocols against quantum adversaries}
Some of the protocols we present are purely classical but in order to determine whether they are secure against quantum adversaries we need to translate them into the quantum formalism. This section describes briefly how this can be achieved and analyses the security of these protocols in the quantum setting. While the actual security proofs may appear trivial, we have decided to include them for completeness.
\subsection{Classical protocol in the quantum formalism}
Sending a classical bit $b \in \{0, 1\}$ is equivalent to encoding it in the computational basis and sending the resulting state $\ket{b}$ to the other party. Receiving a classical bit corresponds to receiving a qubit and immediately measuring it in the computational basis.
\subsection{Bit commitment from secret sharing}
\label{app:secshasec}
Here we analyse Protocol \ref{prot:secshar} from Section \ref{sec:alphasep}. If Alice and Amy are honest they will measure the qubits they receive immediately in the computational basis. Once the measurement outcomes are known Bob's commitment is well-defined and he will not be able to cheat. If Bob is honest $r$ will be a truly random bit. Then what Alice and Amy receive can be described by the following density matrix
\begin{equation*}
\rho_{AA'}^{d} = \frac{1}{2} [ \ketbra{0}_{A} \otimes \ketbra{d}_{A'} + \ketbra{1}_{A} \otimes \ketbra{1 - d}_{A'}].
\end{equation*}
It is easy to convince ourselves that while $\rho_{AA'}^{0}$ and $\rho_{AA'}^{1}$ are perfectly distinguishable the reduced states are fully mixed, $\rho_{A}^{0} = \rho_{A}^{1} = \rho_{A'}^{0} = \rho_{A'}^{1} = \frac{\mathbb{I}}{2}$. Hence, Alice and Amy remain perfectly ignorant about Bob's commitment as long as they are separated.
\subsection{Bit commitment in the local command}
\label{app:loccomsec}
Here we analyse Protocol \ref{prot:trivlocal} from Section \ref{sec:bobsplitopen}. Clearly, the protocol is perfectly hiding because Alice does not receive any messages until the beginning of the open phase. To show that it is also weakly binding we need to employ no-signalling between Bob and Brian.
\begin{lem}
Protocol~\ref{prot:trivlocal} is weakly binding with $\varepsilon = 0$.
\end{lem}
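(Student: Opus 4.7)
The plan is to exploit the defining feature of the local command model, namely that Brian's cheating map $\Phi^{\textnormal{cheat}}_{B' \to P'C'}$ is fixed and independent of the challenge bit $b$. Consequently the marginal distribution of the bit $y$ that Brian sends to Alice in the open phase is a single distribution $Q$ on $\{0,1\}$, determined entirely by $\sigma_{ABB'}$ and $\Phi^{\textnormal{cheat}}_{B'}$, not by whatever bit $b$ Victor hands to Bob. Writing $q_c := Q(c)$, we have $q_0 + q_1 = 1$.

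Next I would unfold the verification in Protocol~\ref{prot:trivlocal}: Alice accepts the unveiling of $b$ only if both $x = b$ (from Bob) and $y = b$ (from Brian). Hence $p_b$ is upper bounded by the marginal probability that Brian transmits $y = b$, irrespective of any entanglement Bob and Brian might share in $\sigma_{ABB'}$ and irrespective of how cleverly Bob's map $\Phi^{\textnormal{cheat},b}_{B \to PC}$ is chosen. Using the local-command structure,
\begin{equation*}
p_b \;=\; \Pr[\textnormal{accept}\mid \textnormal{challenge} = b] \;\leq\; \Pr[y = b \mid \textnormal{challenge} = b] \;=\; q_b.
\end{equation*}
Summing over $b \in \{0,1\}$ gives $p_0 + p_1 \leq q_0 + q_1 = 1$, which is precisely $\varepsilon$-weakly binding with $\varepsilon = 0$.

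There is no real obstacle here; the entire content of the lemma is the observation that the local command model enforces no-signalling from Victor's input to Brian's output by \emph{definition} of the cheating map. Two small points to be careful about in the write-up: first, one should not assume independence of $x$ and $y$, since Bob and Brian may share entanglement via $\sigma_{ABB'}$; bounding $p_b$ by the marginal probability of the event $\{y = b\}$ avoids this issue entirely. Second, one should note that the bound holds for every choice of $\sigma_{ABB'} \in \mathcal{K}_{BB'}$ and every admissible pair of cheating maps, as required by Definition~\ref{df:weaklybnd}.
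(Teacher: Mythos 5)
Your proof is correct and uses the same essential idea as the paper's: bound $p_b$ by the probability that Brian alone outputs $y=b$, then observe that this marginal is independent of the challenge (by the structure of the local command model plus no-signalling from Bob's local map to Brian's output), and sum over $b$. The paper unpacks this by explicitly decomposing the cheaters' behavior into strategy choices $r$ and $s$ and applying no-signalling at the level of $p(y=b\mid r,s)$, while you go directly to the marginal distribution $Q$ of $y$; the substance is identical. One small caveat on wording: the fact that $\Pr[y=b\mid\text{challenge}=b]$ is challenge-independent is not purely ``by definition of the cheating map'' — it requires the additional (standard) fact that Bob's local CPTP map on $B$ cannot alter $\operatorname{tr}_B \sigma_{BB'}$, i.e.\ quantum no-signalling, on top of the local-command restriction that Brian's map is $b$-independent. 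You do invoke this correctly in the argument, so this is only a matter of making the write-up precise.
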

\begin{proof}
Suppose that Bob and Brian want to cheat. At the beginning of the open phase each of them picks an opening strategy from sets $R$ and $S$, respectively. Note that this has to be done independently because they are not allowed to communicate. Bob receives the command so his distribution will in general depend on the command and if the command is $b$ denote the probability of picking $r \in R$ by $p_{R}^{b}(r)$. For the second player the distribution has to be fixed and the probability of picking $s \in \cS$ equals $p_{S}(s)$, regardless of what the value of $b$ is. Recall from Section~\ref{sec:bitcom} that $p_{b}$ is the probability that Alice accepts the commitment if the command is $b$. Hence, we can write
\begin{equation*}
p_{b} = \sum_{r \in R} \sum_{s \in \cS} p_{R}^{b}(r) p_{S}(s) p(x = b, y = b | r, s) \leq \sum_{r \in R} \sum_{s \in \cS} p_{R}^{b}(r) p_{S}(s) p(y = b | r, s).
\end{equation*}
By no-signalling we know that $p(y = b | r, s)$ does not depend on $r$ so we can write $p(y = b | s)$ instead. Then we get
\begin{gather*}
p_{0} + p_{1} \leq \sum_{r \in R} \sum_{s \in \cS} \Big[ p_{R}^{0}(r) p_{S}(s) p(y = 0 | s) + p_{R}^{1}(r) p_{S}(s) p(y = 1 | s) \Big] =\\
\sum_{s \in \cS} p_{S}(s) \Big[ p(y = 0 | s) + p(y = 1 | s) \Big] = 1. \qedhere
\end{gather*}
\end{proof}
One might also wonder whether the protocol satisfies the stronger binding requirement (Definition~\ref{df:epsbnd}). However, a similar argument to the one sketched out in Section~\ref{sec:secalice} shows that the stronger definition cannot hold.
\section*{Acknowledgements:}
We thank Roger Colbeck, Fabian Furrer, Tomasz Paterek and Severin Winkler for helpful discussions 
and Adrian Kent for valuable comments on an earlier version of this manuscript.
\bibliographystyle{IEEEtran}
\bibliography{/home/jedrek/Dropbox/Latex/librarysan}
\begin{IEEEbiographynophoto}{J\k{e}drzej Kaniewski}
was born on November 11, 1987 in Warsaw, Poland. He studied at Cambridge University and graduated in 2011 with a B.A.~(Natural Sciences) and MMath degrees. He is currently a Ph.D.~student at the Centre for Quantum Technologies, Singapore.
\end{IEEEbiographynophoto}
\begin{IEEEbiographynophoto}{Marco Tomamichel}
was born on March 13, 1981, in St. Gallen (Switzerland). He studied Electrical Engineering at ETH Zurich (Switzerland), where he graduated in 2007 with a M.Sc.~in Electrical Engineering and Information Technology degree. He then graduated with a Ph.D.~at the Institute of Theoretical Physics at ETH Zurich (Switzerland). He is currently working as a post-doctoral research fellow at the Centre for Quantum Technologies at the National University of Singapore. His research interests include classical and quantum information theory in the non-asymptotic regime, and its applications to cryptography.
\end{IEEEbiographynophoto}
\begin{IEEEbiographynophoto}{Esther H{\"a}nggi}
studied at EPF Lausanne (Switzerland) and graduated in 2005 with a Master's degree in Physics. In 2010
she obtained a doctoral degree in Computer Science from ETH Zurich (Switzerland), before moving to the Centre for Quantum Technologies at the National University of Singapore to work as a post-doctoral research fellow. Her research interests are in the area of quantum information and (quantum) cryptography.

\end{IEEEbiographynophoto}
\begin{IEEEbiographynophoto}{Stephanie Wehner}
is a physicist and computer scientist at the Centre for Quantum Technologies, National University of Singapore, born in Wuerzburg, Germany. She studied at the University of Amsterdam and obtained her Ph.D.~at CWI, before moving to Caltech as a post-doctoral researcher under John Preskill. Since 2010 she is an assistant professor in the Department of Computer Science at the National University of Singapore and a Principal Investigator at the Centre for Quantum Technologies.
\end{IEEEbiographynophoto}
\end{document}